\title{On the Unification of Process Semantics: Logical Semantics}
\author{David Romero-Hern\'andez \quad\qquad David de Frutos-Escrig
\institute{Facultad CC. Matem{\'a}ticas, Universidad Complutense de Madrid\\ Madrid, Spain} \institute{Departamento de Sistemas Inform\'aticos y Computaci\'on
\thanks{This work was partially supported by the Spanish projects TESIS (TIN2009-14312-C02-01), DESAFIOS10 (TIN2009-14599-C03-01) and PROMETIDOS S2009 / TIC-1465. }\\} \email{dromeroh@pdi.ucm.es \quad\qquad defrutos@sip.ucm.es} }
\newcommand{\formyacc}[1]{
$\varphi \in \mathcal{L}^{\prime}_{#1},  \hspace{0.1cm}a \in \emph{Act} \Rightarrow a\varphi \in \mathcal{L}^{\prime}_{#1}$}
\newcommand{\conjU}[2]{
$\sigma, \sigma_j \in \mathcal{L}^{\prime}_#2  \forall j\in J\Rightarrow (\sigma\wedge\bigwedge_{j \in J}\neg \sigma_j\top) \in \mathcal{L}_{#1}$}
\newcommand{\conjsim}[2]{
$\sigma \in \mathcal{L}^{\prime}_#2 \Rightarrow \sigma\in \mathcal{L}^{\prime}_{#1}$; $\sigma \in \mathcal{L}^{\prime}_#2 \Rightarrow \neg \sigma\in \mathcal{L}^{\prime}_{#1}$}
\newcommand{\conj}[2]{
$\sigma \in \mathcal{L}^{\equiv}_#2 \Rightarrow \sigma \in \mathcal{L}^{\prime}_{#1}$}
\newcommand{\conjparticular}[2]{
$X_1, X_2 \subseteq \mathcal{L}_#2\Rightarrow (\bigwedge_{a\top \in X_1} a\top \wedge \bigwedge_{b\top \in X_2}\neg b\top) \in \mathcal{L}^{\prime}_{#1}$}
\newcommand{\conjneg}[2]{
$\sigma \in \mathcal{L}^{\neg}_#2 \Rightarrow \sigma \in \mathcal{L}^{\prime}_{#1}$}
\newcommand{\conjmo}[2]{
$\sigma \in \mathcal{L}^{\surd}_#2 \Rightarrow \sigma \in \mathcal{L}^{\prime}_{#1}$}
\newcommand{\conjnegparticular}[2]{
$X_1 \subseteq \mathcal{L}_#2 \Rightarrow (\bigwedge_{a\top \in X_1}\neg a\top) \in \mathcal{L}^{\prime}_{#1}$}
\newcommand{\conjyform}[2]{
$\varphi \in \mathcal{L}^{\prime}_{#1}, \hspace{0.1cm}  \sigma \in \mathcal{L}^{\equiv}_#2 \Rightarrow \sigma \wedge \varphi \in \mathcal{L}^{\prime}_{#1}$}
\newcommand{\conjyformparticular}[2]{
$\varphi \in \mathcal{L}^{\prime}_{#1}, X_1, X_2 \subseteq \mathcal{L}_#2\Rightarrow (\bigwedge_{a\top \in X_1} a\top \wedge \bigwedge_{b\top \in X_2}\neg b\top) \wedge \varphi\in \mathcal{L}^{\prime}_{#1}$}
\newcommand{\conjyformneg}[2]{
$\varphi \in \mathcal{L}^{\prime}_{#1},  \hspace{0.1cm} \sigma \in \mathcal{L}^{\neg}_#2 \Rightarrow \sigma \wedge \varphi \in \mathcal{L}^{\prime}_{#1}$}
\newcommand{\conjyformmo}[2]{
$\varphi \in \mathcal{L}^{\prime}_{#1},  \hspace{0.1cm} \sigma \in \mathcal{L}^{\surd}_#2 \Rightarrow \sigma \wedge \varphi \in \mathcal{L}^{\prime}_{#1}$}
\newcommand{\conjyformnegparticular}[2]{
$\varphi \in \mathcal{L}^{\prime}_{#1}, X_1 \subseteq \mathcal{L}_#2 \Rightarrow (\bigwedge_{a\top \in X_1}\neg a\top)\wedge \varphi \in \mathcal{L}^{\prime}_{#1}$}
\newcommand{\bran}[2]{#1; $\varphi_i \in \mathcal{L}^{\prime}_{#2}\forall i \in I\Rightarrow \bigwedge_{i\in I} \varphi_i \in \mathcal{L}^{\prime}_{#2}$; \formyacc{#2};}
\newcommand{\brannew}[2]{#1; $\varphi_i \in \mathcal{L}^{\prime}_{#2}\forall i \in I\Rightarrow \bigwedge_{i\in I} \varphi_i \in \mathcal{L}^{\prime}_{#2}$; \formyacc{#2}.}
\newcommand{\dbran}[2]{$\top \in \mathcal{L}^{\prime}_{#2}$; #1; $X\subseteq \emph{Act},  \varphi_a \in \mathcal{L}^{\prime}_{#2}\forall a \in X\Rightarrow \bigwedge_{a\in X} a\varphi_a \in \mathcal{L}^{\prime}_{#2}$.}
\newcommand{\lin}[2]{$\top \in \mathcal{L}^{\prime}_{#2}$; #1; \formyacc{#2}\ .}
\newcommand{\linf}[2]{$\top \in \mathcal{L}^{\prime}_{#2}$; \conjyformneg{#2}{#1}; \formyacc{#2}\ .}
\newcommand{\linfmo}[2]{$\top \in \mathcal{L}^{\prime}_{#2}$; \conjyformmo{#2}{#1}; \formyacc{#2}\ .}
\newcommand{\linfparticular}[2]{$\top \in \mathcal{L}^{\prime}_{#2}$; \conjyformnegparticular{#2}{#1}; \formyacc{#2}\ .}
\newcommand{\linc}[2]{$\top \in \mathcal{L}^{\prime}_{#2}$; \conj{#2}{#1}; \formyacc{#2}\ .}
\newcommand{\lincparticular}[2]{$\top \in \mathcal{L}^{\prime}_{#2}$; \conjparticular{#2}{#1}; \formyacc{#2}\ .}
\newcommand{\linfc}[2]{$\top \in \mathcal{L}^{\prime}_{#2}$; \conjneg{#2}{#1}; \formyacc{#2}\ .}
\newcommand{\linfcmo}[2]{$\top \in \mathcal{L}^{\prime}_{#2}$; \conjmo{#2}{#1}; \formyacc{#2}\ .}
\newcommand{\linfcparticular}[2]{$\top \in \mathcal{L}^{\prime}_{#2}$; \conjnegparticular{#2}{#1}; \formyacc{#2}\ .}
\newcommand{\conjsimold}[2]{
$\sigma \in \mathcal{L}_#2 \hspace {0.05cm} \Rightarrow \sigma\in \mathcal{L}^{\prime}_{#1}$; $\sigma \in \mathcal{L}_#2 \Rightarrow \neg \sigma\in \mathcal{L}^{\prime}_{#1}$}
\newenvironment{proof}{\par\addvspace{\bigskipamount}
\noindent\textit{\textbf{Proof.}}\ }{\par\addvspace{\bigskipamount}}
\begin{document}
\newtheorem{definition}{Definition}
\newtheorem{theorem}{Theorem}
\newtheorem{Proposition}{Proposition}
\newtheorem{lemma}{Lemma}
\newtheorem{example}{Example}
\newtheorem{remark}{Remark}

\maketitle

\begin{abstract}
We continue with the task of obtaining a unifying view of process semantics by considering in this case the logical characterization of the semantics. We start by considering the classic linear time-branching time spectrum developed by R.J. van Glabbeek. He provided a logical characterization of most of the semantics in his spectrum but, without following a unique pattern. In this paper, we present a uniform logical characterization of all the semantics in the enlarged spectrum. The common structure of the formulas that constitute all the corresponding logics gives us a much clearer picture of the spectrum, clarifying the relations between the different semantics, and allows us to develop generic proofs of some general properties of the semantics.
\end{abstract}

\section{Introduction} \label{introduction}
The definition of the semantics for concurrent / non-deterministic processes is a delicate question. As soon as the effect of non-determinism is taken into account we have to decide to which extent we will do so. Trace semantics, which were adequate for deterministic systems, obviously do not consider non-determinism at all. Instead, bisimulation semantics captures all the information induced by the choices at the observed process. There are different semantics for processes in the literature. The most popular of them were collected in van Glabbeek's linear time-branching time spectrum \cite{Gla01}, after being introduced along the years by different authors. At the abstract level a semantics is just an equivalence relation (or a preorder) between processes. These can be defined by choosing between different frameworks for the different semantics, so we have operational, observational, testing, logical and equational semantics. 

In \cite{Gla01} we find the famous picture of the ltbt-spectrum (Figure \ref{spectrum_Gla}) and descriptions of all the semantics in it including observational / testing, logical and equational (when possible) characterizations. Certainly, the basic elements used in the characterizations for a given framework are somewhat related, but a more systematic approach is desirable.
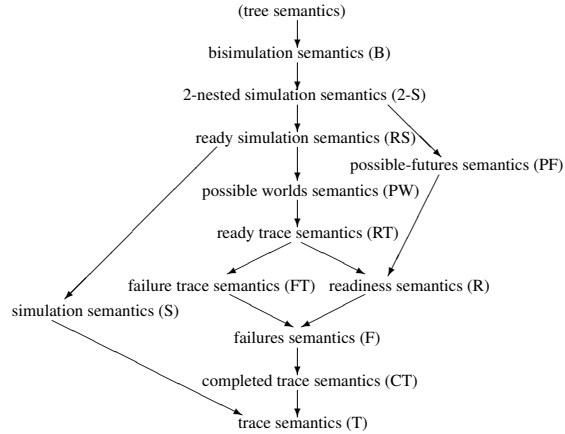
\begin{figure}[!]
\begin{center}
\scalebox{0.8}{\scriptsize
\begin{picture}(270,195)
\put(107,195){(tree semantics)}
     \put(135,193){\vector(0,-1){13}}
\put(93,175){bisimulation semantics (B)}
     \put(135,173){\vector(0,-1){13}}
\put(81,155){2-nested simulation semantics (2-S)}
     \put(135,153){\vector(0,-1){13}}
\put(87,135){ready simulation semantics (RS)}
     \put(135,133){\vector(0,-1){16}}
\put(90,110){possible worlds semantics (PW)}
     \put(135,108){\vector(0,-1){13}}
\put(97,90){ready trace semantics (RT)}
     \put(132,88){\vector(-2,-1){30}}
     \put(137,88){\vector(2,-1){30}}
\put(55,65){failure trace semantics (FT)}
\put(150,65){readiness semantics (R)}
     \put(103,63){\vector(2,-1){30}}
     \put(167,63){\vector(-2,-1){30}}
\put(105,40){failures semantics (F)}
     \put(135,38){\vector(0,-1){13}}
\put(90,20){completed trace semantics (CT)}
     \put(135,18){\vector(0,-1){13}}
\put(107,0){trace semantics (T)}

     \put(178,153){\vector(1,-1){25}}
\put(160,122){possible-futures semantics (PF)}
     \put(202,120){\vector(-1,-2){24}}

     \put(97,133){\vector(-1,-1){72}}
\put(0,53){simulation semantics (S)}
     \put(19,51){\vector(2,-1){90}}
\end{picture}}
\caption{The ltbt-spectrum}\label{spectrum_Gla}
\vspace{-0.7cm}
\end{center}
\end{figure}
\normalsize
In \cite{fgp09_equational,fgp09_observational}, a unified presentation of both the observational and the equational semantics has been developed, and it has been shown how the generic definitions allow to relate both without repeating similar arguments.

In this paper we present a unified view of the logical semantics by showing how different subsets of the Hennessy-Milner logic HML \cite{hm85} characterize each of the semantics in the spectrum. Certainly, the logical characterizations presented in \cite{Gla01} were also subsets of HML; however in that paper the author looked for sets of formulas as simple (and hence as small) as possible, probably driven by the idea that a smaller set of formulas would make any study based on it simpler. Instead, we will follow the opposite approach. Formally speaking, for each semantics defined by a preorder $\prec$ we have a (larger) language $\mathcal{L} \subseteq HML$ characterizing it, that is defined by $\varphi \in \mathcal{L}$ $\Leftrightarrow$ $((p \prec q \wedge p \models \varphi) \Rightarrow q \models \varphi)$. However, it is not easy (nor specially illustrative) to look for the whole set of formulas characterizing each of the semantics: we will consider sufficiently large families defined in a simple way, that provide more natural characterizations which immediately show the relations between the different semantics. For instance, whenever a semantics is finer than other, the logic characterizing the first will contain that for the latter.

As already happened in \cite{fgp09_equational,fgp09_observational}, our unified logical semantics will provide an \emph{enlarged spectrum} (Figure \ref{spectrum_fgp}) with a clearer structure and additional nodes which correspond to new semantics that in some cases have been also defined using different frameworks by several authors. In particular, we will show the logical characterization of revivals semantics introduced by B. Roscoe in \cite{Ros09}, that was already axiomatized in \cite{fgp09_equational}. 
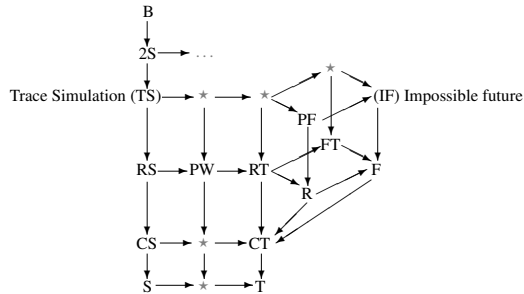
\begin{figure}[h]
\begin{center}
\scalebox{0.8}{\scriptsize
\begin{picture}(112,128)
\put(0,0){S} \put(26,1){\textcolor{gray}{$\star$}} \put(53,0){T}
\put(-3,20){CS} \put(26,21){\textcolor{gray}{$\star$}} \put(50,20){CT}
\put(-3,55){RS} \put(22,55){PW} \put(50,55){RT} \put(75,44){R} \put(84,67){FT} \put(108,55){F}
\put(-63,90){Trace Simulation (TS)} \put(26,91){\textcolor{gray}{$\star$}} \put(55,91){\textcolor{gray}{$\star$}} \put(73,79){PF} \put(86,104){\textcolor{gray}{$\star$}} \put(109,90){(IF) Impossible future}
\put(-2,110){2S} \put(25,111){\textcolor{gray}{\dots}}
\put(0,130){B}
\put(6,3){\vector(1,0){18}} \put(34,3){\vector(1,0){18}}
\put(2,18){\vector(0,-1){12}} \put(8,23){\vector(1,0){15}} \put(34,23){\vector(1,0){15}}
\put(29,18){\vector(0,-1){12}}
\put(56,18){\vector(0,-1){12}}
\put(2,52){\vector(0,-1){25}} \put(7,57){\vector(1,0){14}} \put(35,57){\vector(1,0){14}}
                     \put(61,57){\vector(2,1){23}}  \put(82,46){\vector(2,1){24}}
                     \put(61,57){\vector(2,-1){13}}  \put(94,69){\vector(2,-1){14}}
\put(29,52){\vector(0,-1){25}}
\put(56,52){\vector(0,-1){25}}
\put(77.5,42){\vector(-1,-1){15}} \put(108,53){\vector(-3,-2){45}}
\put(2,87){\vector(0,-1){25}} \put(9,92){\vector(1,0){15}} \put(34,92){\vector(1,0){17}}
                     \put(61,92){\vector(2,1){23}}  \put(85,81){\vector(2,1){23}}
                     \put(61,91){\vector(2,-1){12}}  \put(94,104){\vector(2,-1){14}}
\put(29,87){\vector(0,-1){25}}
\put(56,87){\vector(0,-1){25}} \put(78,78){\vector(0,-1){28}} \put(89,102){\vector(0,-1){28}} \put(111,87){\vector(0,-1){25}}
\put(2,108){\vector(0,-1){12}} \put(7,113){\vector(1,0){15}}
\put(2,128){\vector(0,-1){12}}
\end{picture}}
\caption{(A part of) the enlarged spectrum}\label{spectrum_fgp}
\vspace{-0.7cm}
\end{center}
\end{figure}
\normalsize

Moreover, we ``discover'' in this paper the semantics of minimal readies: it was not included in the previous version of the \emph{enlarged spectrum} because the development of the observational and equational frameworks did not detect its existence, while now in the logical framework its definition arises quite naturally.
Finally, we have also been able to discover a (minor) mistake in the classic logical characterization of one of the semantics in the original spectrum, Possible Worlds, that has been easily corrected when applying our uniform characterization. 

Due to lack of space we had to remove most of the proofs and also a part of the results. An extended version can be found at: \url{http://maude.sip.ucm.es/~miguelpt/papers/logsem.pdf}.

We strongly appreciate the comments and suggestions of the referees and those from Miguel Palomino, that have contributed to improve the presentation of the paper.
\section{Preliminaries}\label{preliminaries}
We will not repeat here the long list of original definitions of all the semantics in van Glabbeek's spectrum; please, take a look at \cite{Gla01}. The systematic classification of all these semantics using both observational and equational characterizations can be found at \cite{fgp09_equational,fgp09_observational}. All the semantics that we consider can be defined over arbitrary (possibly infinite) processes whose operational semantics is defined by means of a labelled transition system (lts) $\mathcal{P} = (Proc, Act, \rightarrow)$.  We will use the classical notation $p \stackrel{a}{\rightarrow} p^{\prime}$ to represent the transitions of processes. Moreover, it is also useful to have a syntactic notation for representing finite processes. We will use BCCSP \cite{Gla01, fgp09_equational}:

\begin{definition}
Given a set of actions Act, the set BCCSP(Act) of processes is defined by the BNF-grammar: $p::= \textbf{0} \  | \ ap \ | \ p+q $. We omit the known operational semantics of BCCSP, which can be found at \cite{Gla01, fgp09_equational}.
\end{definition}

The main ingredient in the classification of semantics, that of course was already present in the original spectrum, is the distinction between branching and linear time semantics. The most important branching semantics are the \textit{N}-constrained simulations that form the leftmost vertical line of the \emph{enlarged spectrum}. We like to call it the spine of the spectrum, because the rest of the semantics hang on (following the left to right lines) it. \textit{N}-constrained simulation were studied in a general and systematic way in \cite{fg08}.

\begin{definition}\label{sim:restringida}
Given a relation \textit{N} over BCCSP processes, an \textit{N}-constrained simulation is a relation \textit{$S_N$} such that $S_N \subseteq N$ and whenever \textit{p$S_N\,$q} if $p\stackrel{a}{\rightarrow}p^{\prime}$ then there exists $q^{\prime}$ with $q\stackrel{a}{\rightarrow}q^{\prime}$ and $p^{\prime}S_N\,q^{\prime}$.
We say that \textit{p} is \textit{N}-simulated by \textit{q}, or that \textit{q} \textit{N}-simulates \textit{p}, written \ \textit{p}$\sqsubseteq_{NS}$\textit{q}, when there exists an \textit{N}-constrained simulation $S_N$ such that $pS_N\,q$.
\end{definition}

Although in order to obtain \textit{N}-constrained similarities with good properties is not necessary for \textit{N} to be an equivalence relation, that happens in most of the interesting cases (including the most popular ones). For instance, Plain Simulations are just \textit{U}-constrained simulations, where \textit{U} is the universal relation $pUq$ $\forall p, q \in Proc$. Similarly, Ready Simulations can be defined by means of \textit{I}-simulations, with $pIq$ $\Leftrightarrow$ $I(p)=I(q)$ $\Leftrightarrow$ $(p\stackrel{a}{\rightarrow} \ \Leftrightarrow q \stackrel{a}{\rightarrow} \forall \ a \in Act)$; while Complete Simulations correspond to \textit{C}-simulations, taking $pCq$ $\Leftrightarrow$ $(\exists \ a \in Act \ p\stackrel{a}{\rightarrow} \ \Leftrightarrow \exists \ a\in Act \ q\stackrel{a}{\rightarrow})$.
Note that the Ready Simulation order is usually denoted by $\sqsubseteq_{RS}$, but when using our general notation $\sqsubseteq_{NS}$ we shall write instead $\sqsubseteq_{IS}$.

\subsection{Van Glabbeek's logical characterizations for process semantics}
Van Glabbeek also presented in \cite{Gla01} a logical characterization of the semantics in the (classical) linear time-branching time spectrum. These logics are sublanguages of the Hennessy-Milner logic \cite{hm85}, $\mathcal{L}_{HM}$, characterizing the bisimulation semantics in the general (possibly infinitary) case. 
\begin{definition}[\textbf{Hennessy-Milner logic, HML}]\label{HM_logic}
The set $\mathcal{L}_{HM}$ of Hennessy-Milner logical formulas is defined by:  if  $\varphi$, $\varphi_i \in \mathcal{L}_{HM}$ $\forall i \in I$ and $a \in Act$ then we have $\bigwedge_{i \in I} \varphi_i$, $a\varphi$, $\neg \varphi$ $\in \mathcal{L}_{HM}$.

For each labelled transition system $\mathbb{P}$, the satisfaction relation $\models\subseteq \mathbb{P} \times \mathcal{L}_{HM}$ is defined by:
\begin{itemize}
\item $p\models a\varphi$ if there exists $q\in \mathbb{P}:p\stackrel{a}{\rightarrow}q$ and $q \models \varphi$;
\item $p\models \bigwedge_{i\in I} \varphi_i$ if for all $i\in I: p \models \varphi_i$.
\item $p\models \neg \varphi$ if $p\nvDash \varphi$.
\end{itemize}
\end{definition}

Note that $\bigwedge_{i\in\emptyset}\varphi_i \in \mathcal{L}_{HM}$, and we have $p\models \bigwedge_{i\in\emptyset}\varphi_i$ for all \textit{p}. Therefore, in the following we will consider that $\top \in \mathcal{L}_{HM}$, where $\top$ is syntactic sugar for $\bigwedge_{i\in\emptyset}\varphi_i$. The finite version of this logic ($\mathcal{L}^{f}_{HM}$) uses binary conjunction $\wedge$ instead of the general conjunction $\bigwedge_{i \in I}$. It is well known that $\mathcal{L}^{f}_{HM}$ characterizes the bisimulation semantics between finite image processes, that are those that do not allow infinite branching for any action $a \in Act$ at any state.
Van Glabbeek uses $\mathcal{L}_{B}$ to refer to $\mathcal{L}_{HM}$ in \cite{Gla01}.

\begin{definition}
Any subset $\mathcal{L}$ of $\mathcal{L}_{HM}$ induces a logical semantics for processes, given by the preorder $\sqsubseteq_{\mathcal{L}}$: We have $p\sqsubseteq_{\mathcal{L}} q$ if, and only if, for all $\varphi \in \mathcal{L}$ $(p\models \varphi$ $\Rightarrow$ $q\models \varphi)$. We say that $\mathcal{L}$ and $\mathcal{L}^{\prime}$ are equivalent, and we write $\mathcal{L} \sim \mathcal{L}^{\prime}$, if they induce the same semantics, that is $\sqsubseteq_{\mathcal{L}}\,=\,\sqsubseteq_{\mathcal{L}^{\prime}}$.
\end{definition}

Table \ref{logic_table} contains the logical characterization of each of the semantics in van Glabbeek's spectrum: $\mathcal{L}_{Z}$ with $Z\in \{T, CT, F, FT, R, RT, PF, S, CS,$ $RS, 2S, PW, B\}$, denotes each of the logics; the dots indicate the clauses that we need to introduce to obtain the corresponding languages; and the boxes marked with $\mathcal{\nu}$ correspond to rules that could be added to $\mathcal{L}_{Z}$, but they would only introduce redundant formulas. The following connectives, which appear in the table, are not in $\mathcal{L}_{HM}$ but can be obtained as syntactic sugar: $$\widetilde{X}:=\bigwedge_{a\in X} \neg a \top \hspace{1.2cm} \widetilde{X}\varphi^{\prime}:=\widetilde{X} \wedge \varphi^{\prime} \hspace{1.2cm} 0:=\widetilde{Act}$$
$$\varphi_1 \wedge \varphi_2:=\bigwedge_{i\in \{1,2\}}\varphi_i \hspace{0.65cm} X:=\bigwedge_{a\in X}a\top \wedge \bigwedge_{a\not \in X} \neg a \top \hspace{0.7cm}X\varphi^{\prime}:=X \wedge \varphi^{\prime} \hspace{0.65cm} \widetilde{a}:=\neg a \top$$

\begin{table}[!]
\begin{center}
\scalebox{0.85}{\footnotesize
\begin{tabular}{|c|c|c|c|c|c|c|c|c|c|c|c|c||c|}
\hline
\backslashbox{Formulas}{Semantics ($\mathcal{Z}$)}& T & S & CT & CS & F & FT & R & RT & PW & RS & PF & 2S & B\\
\hline
$\top  \in \mathcal{L}_{\mathcal{Z}}$ & $\bullet$ & $\nu$ & $\bullet$ & $\nu$ & $\bullet$ & $\bullet$ & $\bullet$ & $\bullet$ & $\nu$ & $\nu$ & $\nu$ & $\nu$ & $\nu$\\
\hline
$\textbf{0}\in \mathcal{L}_{\mathcal{Z}}$ & & & $\bullet$ & $\bullet$ & $\nu$ & $\nu$ & $\nu$ & $\nu$ &$\nu$ & $\nu$ & $\nu$ & $\nu$ & $\nu$ \\
\hline
$\varphi \in \mathcal{L}_{\mathcal{Z}}, \hspace{0.075cm} a \in Act \Rightarrow$ &\multirow{2}{*}{$\bullet$} &\multirow{2}{*}{$\bullet$} &\multirow{2}{*}{$\bullet$} &\multirow{2}{*}{$\bullet$} &\multirow{2}{*}{$\bullet$} &\multirow{2}{*}{$\bullet$} &\multirow{2}{*}{$\bullet$} &\multirow{2}{*}{$\bullet$} &\multirow{2}{*}{$\nu$} &\multirow{2}{*}{$\bullet$} &\multirow{2}{*}{$\bullet$} & \multirow{2}{*}{$\bullet$} & \multirow{2}{*}{$\bullet$}\\
$a \varphi \in \mathcal{L}_{\mathcal{Z}}$& & & & & & & & & & & & &\\
\hline
$X \subseteq Act \Rightarrow$ & & & & & \multirow{2}{*}{$\bullet$} & \multirow{2}{*}{$\nu$} & \multirow{2}{*}{$\nu$} & \multirow{2}{*}{$\nu$} & \multirow{2}{*}{$\nu$} &\multirow{2}{*}{$\nu$} &\multirow{2}{*}{$\nu$} &\multirow{2}{*}{$\nu$} &\multirow{2}{*}{$\nu$} \\
$\widetilde{X} \in \mathcal{L}_{\mathcal{Z}}$ & & & & & & & & & & & & & \\
\hline
$X \subseteq Act \Rightarrow$ & & & & & & & \multirow{2}{*}{$\bullet$} &  \multirow{2}{*}{$\nu$} & \multirow{2}{*}{$\bullet$} &\multirow{2}{*}{$\bullet$}& \multirow{2}{*}{$\nu$} &\multirow{2}{*}{$\nu$} & \multirow{2}{*}{$\nu$}\\
$X \in \mathcal{L}_{\mathcal{Z}}$ & & & & & & & & & & & & &\\
\hline
$\varphi \in \mathcal{L}_{\mathcal{Z}}, \hspace{0.075cm} X \subseteq Act \Rightarrow$ & & & & & & \multirow{2}{*}{$\bullet$} & & \multirow{2}{*}{$\nu$} & \multirow{2}{*}{$\nu$}& \multirow{2}{*}{$\nu$} & & \multirow{2}{*}{$\nu$} & \multirow{2}{*}{$\nu$} \\
$\widetilde{X}\varphi \in \mathcal{L}_{\mathcal{Z}}$ & & & & & & & & & & & & &\\
\hline
$\varphi \in \mathcal{L}_{\mathcal{Z}}, \hspace{0.075cm} X \subseteq Act \Rightarrow$ & & & & & & & & \multirow{2}{*}{$\bullet$}  &\multirow{2}{*}{$\nu$} &\multirow{2}{*}{$\nu$} & &\multirow{2}{*}{$\nu$} & \multirow{2}{*}{$\nu$} \\
$X\varphi \in \mathcal{L}_{\mathcal{Z}}$ & & & & & & & & & & & & &\\
\hline
$\varphi_i \in \mathcal{L}_{\mathcal{Z}} \hspace{0.075cm} \forall i \in I \Rightarrow$ & & \multirow{2}{*}{$\bullet$} & &\multirow{2}{*}{$\bullet$} & & & & & &\multirow{2}{*}{$\bullet$} & & \multirow{2}{*}{$\bullet$} & \multirow{2}{*}{$\bullet$}\\
$\bigwedge_{i \in I} \varphi_i \in \mathcal{L}_{\mathcal{Z}}$ & & & & & & & & & & & & &\\
\hline
$X\subseteq Act, \hspace{0.075cm} \varphi_a \in \mathcal{L}_{PW} \hspace{0.075cm} \forall a \in X \Rightarrow$ & & & & & & & & & \multirow{2}{*}{$\bullet$}  &\multirow{2}{*}{$\nu$} & & \multirow{2}{*}{$\nu$} & \multirow{2}{*}{$\nu$}\\
$\bigwedge_{a \in X} a\varphi_a \in \mathcal{L}_{\mathcal{Z}}$ & & & & & & & & & & & & &\\
\hline
$\varphi_i, \varphi_j \in \mathcal{L}_{T} \hspace{0.075cm} \forall i \in I \hspace{0.05cm} \forall j \in J \Rightarrow$ & & & & & & & & & & &\multirow{2}{*}{$\bullet$} & \multirow{2}{*}{$\nu$}  & \multirow{2}{*}{$\nu$}\\
$\bigwedge_{i \in I} \varphi_i \wedge \bigwedge_{j \in J} \neg \varphi_j \in \mathcal{L}_{\mathcal{Z}}$ & & & & & & & & & & & & &\\
\hline
$\varphi \in \mathcal{L}_{S} \Rightarrow$ & & & & & & & & & & & & \multirow{2}{*}{$\bullet$} & \multirow{2}{*}{$\nu$}\\
$\neg \varphi \in \mathcal{L}_{\mathcal{Z}}$ & & & & & & & & & & & & &\\
\hline
$\varphi \in \mathcal{L}_{\mathcal{Z}} \Rightarrow$ & & & & & & & & & & & & &\multirow{2}{*}{$\bullet$}\\
$\neg \varphi \in \mathcal{L}_{\mathcal{Z}}$ & & & & & & & & & & & & &\\
\hline
\end{tabular}}
\vspace{1ex}
\caption{Van Glabbeek's logical characterizations for the semantics in the ltbt-spectrum} \label{logic_table}
\vspace{-0.7cm}
\end{center}
\end{table}

Disjunction does not appear in $\mathcal{L}_{HM}$, and therefore neither in any of the logics $\mathcal{L}_{Z}$ characterizing the semantics in the linear time-branching time spectrum. It is probably folklore that it can be added in all cases without changing the expressive power of each of these logics, but since we have not found a clear statement in this direction in any of our references, next we establish the result and comment on its proof. 

\begin{Proposition}\label{disyuncion_teo}
If we define $\mathcal{L}_Z^{\vee}$ with $Z\in \{T, CT, F, FT, R, RT, PF, S, CS,$ $RS, 2S, PW, B\}$, by adding the clause $\sigma_i \in \mathcal{L}_Z^{\vee} \hspace{0.2cm} \forall i \in I \Rightarrow \bigvee_{i\in I}\sigma_i \in \mathcal{L}_Z^{\vee}$ to the clauses which define each semantics $\mathcal{L}_Z$, replacing $\mathcal{L}_Z$ by $\mathcal{L}_Z^{\vee}$ in each of the other clauses, and making $p \models \bigvee \sigma_i$ iff $\exists i \in I$: $p\models \sigma_i$, then we have $\mathcal{L}_Z^{\vee}\sim\mathcal{L}_Z$.
\end{Proposition}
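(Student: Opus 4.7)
The inclusion $\sqsubseteq_{\mathcal{L}_Z^{\vee}}\,\subseteq\,\sqsubseteq_{\mathcal{L}_Z}$ is trivial, because $\mathcal{L}_Z\subseteq\mathcal{L}_Z^{\vee}$ (any rule generating a formula of $\mathcal{L}_Z$ is also a rule of $\mathcal{L}_Z^{\vee}$), so distinguishing with fewer formulas gives a coarser preorder. The substance is in the converse, and for this I would prove the following normal form claim by structural induction on $\varphi\in\mathcal{L}_Z^{\vee}$:
\emph{every $\varphi\in\mathcal{L}_Z^{\vee}$ is semantically equivalent to some $\bigvee_{k\in K}\sigma_k$ with each $\sigma_k\in\mathcal{L}_Z$.} Once this is established, the implication $p\models\varphi\Rightarrow q\models\varphi$ whenever $p\sqsubseteq_{\mathcal{L}_Z}q$ follows immediately: if $p\models\bigvee_k\sigma_k$ then some $\sigma_{k_0}\in\mathcal{L}_Z$ is satisfied by $p$, hence by $q$, hence $q\models\varphi$.

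The inductive step uses the standard semantic distributive laws to ``push disjunctions outwards'' through the remaining constructors:
\begin{itemize}
\item $a\bigl(\bigvee_i\tau_i\bigr)\equiv\bigvee_i a\tau_i$;
\item $\bigwedge_{i\in I}\bigl(\bigvee_{j\in J_i}\tau_{i,j}\bigr)\equiv\bigvee_{f\in\prod_i J_i}\bigwedge_{i\in I}\tau_{i,f(i)}$ (using the axiom of choice);
\item $\neg\bigl(\bigvee_i\tau_i\bigr)\equiv\bigwedge_i\neg\tau_i$.
\end{itemize}
Applied to the inductive hypotheses, these rules reduce each constructor of $\mathcal{L}_Z^{\vee}$ to an outer disjunction of non-disjunctive sub-formulas. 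The nontrivial obligation, and the one I expect to be the main source of work, is to check that these non-disjunctive sub-formulas actually lie inside $\mathcal{L}_Z$ for each $Z$ in the list; that is, that the specific rules defining $\mathcal{L}_Z$ in Table~\ref{logic_table} are closed under the manipulations above.

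This verification is done semantics by semantics, but it is uniform in spirit. For the linear logics $\mathcal{L}_T$ and $\mathcal{L}_{CT}$, which have no conjunction, only the action-prefix law is used and the result is immediate. For $\mathcal{L}_F,\mathcal{L}_{FT},\mathcal{L}_R,\mathcal{L}_{RT}$, the conjunction of a ready/failure prefix $X$ or $\widetilde X$ with a disjunction distributes to give formulas of the same shape, e.g.\ $\widetilde X\wedge\bigvee_i\tau_i\equiv\bigvee_i\widetilde X\tau_i$, which re-enter the grammar. For $\mathcal{L}_S,\mathcal{L}_{CS},\mathcal{L}_{RS},\mathcal{L}_{PW}$ arbitrary conjunctions (resp. conjunctions of $a\varphi_a$) are primitive, so the distributive law for $\bigwedge$ lands inside the logic; for $\mathcal{L}_{PW}$ one needs to combine the action-prefix step with the PW-style conjunction. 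For $\mathcal{L}_{2S}$ the negation case applies only to $\sigma\in\mathcal{L}_S^{\vee}$: by the inductive hypothesis on $\mathcal{L}_S^{\vee}$ we write $\sigma\equiv\bigvee_j\tau_j$ with $\tau_j\in\mathcal{L}_S$, and then $\neg\sigma\equiv\bigwedge_j\neg\tau_j$, which is a conjunction of $\mathcal{L}_{2S}$ formulas and so in $\mathcal{L}_{2S}$. The cases of $\mathcal{L}_{PF}$ and $\mathcal{L}_B$ are analogous: for PF, disjunctions inside the trace sub-formulas $\varphi_i,\varphi_j$ are first normalised via the $\mathcal{L}_T$-case, then pushed out through the outer conjunction/negation pattern using the three distributive laws above; for B all constructors are primitive and the three laws directly close on $\mathcal{L}_B$.

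The main obstacle, as indicated, is not any single difficult step but the discipline of going through each entry of Table~\ref{logic_table} and verifying that the grammar is stable under the outward-pushing transformations, in particular for the semantics (PF, PW, RT, FT) whose rules refer to fixed patterns $X$, $\widetilde X$ or to sub-formulas drawn from a \emph{different} language. In each of these cases the key observation is that the pattern sits on the \emph{outside} of the formula that gets distributed, so after distribution it is simply replicated in every disjunct, keeping the disjunct inside $\mathcal{L}_Z$.
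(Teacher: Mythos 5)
Your proposal is correct and follows essentially the same route as the paper: the paper's proof also ``floats'' disjunctions outward using distributivity over conjunction and the prefix operator (noting that negation is never applied to a formula of the language being defined, except for $B$ where De Morgan suffices), concluding that every formula of $\mathcal{L}_Z^{\vee}$ is equivalent to a disjunction of $\mathcal{L}_Z$-formulas. Your write-up merely makes explicit the per-semantics grammar-stability check and handles $B$ by the same distributive laws instead of De Morgan, which is an inessential variation.
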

\begin{proof}
It is interesting to observe that even if the result is valid for all the semantics, the reason behind is not the same as in the case of bisimulation. In that case, we only need to apply the De Morgan laws to get the ``definition'' of $\vee$ as a combination of $\neg$ and  $\wedge$. However, for the rest of the semantics, we do not have negation as ``constructor'', but $\vee$ distributes over $\wedge$ and the prefix operator (because $\bigvee a \varphi_i= a \bigvee \varphi_i$), while negation is never applied to a formula $\varphi^{\prime} \in \mathcal{L}_Z^{\vee}$. Therefore, by floating away any $\vee$ in a formula in $\mathcal{L}_Z^{\vee}$, it becomes equivalent to a disjunction of formulas within the corresponding language $\mathcal{L}_Z$, and then the equivalence of both logics follows.
\end{proof}

\begin{remark}
Since we have $\perp = \neg \top = \neg \bigwedge_{i \in \emptyset} = \bigvee_{i \in \emptyset}$ , we conclude that $\perp \in \mathcal{L}^{\vee}_{Z}$ , and therefore all the logical semantics defined by these logics remain the same if we add $\perp$ and disjunction to their definitions. 
Moreover, $\wedge$ cannot be filtered by the prefix operator. By the way, this makes the difference between linear semantics (whose logics do not allow an arbitrary use of conjunction) and branching semantics (where we can arbitrarily use conjunction). It is important to note that $a\!\!\perp \sim \perp$ and therefore $a\!\! \perp \nsim \neg a \top$.
\end{remark}

\subsection{Observational characterizations for process semantics}
There is a clear connection between the observational and the logical semantics. In fact, we expected that once we had a unified presentation of the observational semantics it would be easy to transmute it into a unified presentation of the logical semantics. This was not that easy at the end, but certainly our unified logics were inspired by the previously obtained unified observational semantics. Moreover, we need these definitions if we want to check that our new logical semantics are indeed equivalent characterizations of the same semantics. Obviously, for the cases of the semantics in the classic spectrum we could instead compare (one by one) our new logics and those provided by van Glabbeek in \cite{Gla01}, but this cannot be done for any of the new semantics. Therefore, we briefly present  next the definitions (from  \cite{fgp09_observational}) needed to get these observational characterizations. 

One important fact about these characterizations is its finite character. All the considered observations are (structurally) finite, and this means that the characterizations work as long as we keep ourselves to the continuous side of the range of possible semantic domains. Therefore, we have to restrict ourselves to finite processes, or  at least to image-finite processes. It is for this class of processes that Th. \ref{teorema_preliminares} works.

\begin{definition}\label{obs:locales}
The sets \textit{$L_N$} of local observations corresponding to each of the \textit{N}-constrained simulations in the spectrum, and $L_N(p)$ of observations associated to a process \textit{p}, are defined as follows:
\begin{itemize} \label{def L_N}
\item S: $L_U=\{\cdot\}$, $L_U(p)=\cdot$.
\item CS: $L_C=Bool$, $L_C(p)$ is true if $p\models\textbf{0}$ and false otherwise.
\item RS: $L_I=\mathcal{P}(Act)$, $L_I(p)=I(p)=\{a |$ $a\in Act$ and $p \stackrel{a}{\rightarrow} \}$.
\item TS: $L_T=\mathcal{P}(Act^*)$, $L_T(p)$ is \textit{T(p)}, the set of traces of p.
\item 2S: $L_S=\{\|p\|_S \}$, $L_S(p)= \| p \|_S$ where $\| p \|_{S}$ denotes the simulation equivalence class of \textit{p}.
\item kS: $L_S=\{\|p\|_{(k-1)S} \}$, $L_S(p)= \| p \|_{(k-1)S}$, where $\| p \|_{kS}$ denotes the k-nested simulation equivalence class of \textit{p}.
\end{itemize}
\end{definition}

Each $N\in\{U,C,I,T,S\}$ induces uniformily an equivalence relation, that by abuse of notation we will also denote by N: $pNq ::= L_N(p)=L_N(q)$.

\begin{remark}
In the definition above we have considered both the trace semantics and the simulation semantics when defining $L_T$ and $L_S$. Certainly, we expect that the reader will be familiarized with these two classic semantics, and this is why we avoid a reminder of their definitions here. Also, there is another (more formal) reason for which we do this: the trace and the simulation semantics are two of the semantics to be classified by our systematic approaches, and it would not be nice to have their definitions in advance. Instead, we can apply (when needed) our definitions in a sliced way: based on \textit{U} we define plain simulations, and then the trace semantics, and once this is done, we have T and S to define TS and 2S. The same is valid, step by step, for all the nested simulation semantics.
\end{remark} 

\begin{definition} \label{def:ramificadas}
\begin{enumerate}
\item A branching general observation (bgo for short) of a process is a finite, non-empty tree whose arcs are labeled with actions in Act and whose nodes are labeled with local observations from $L_N$, for \textit{N} a constraint; the corresponding set $BGO_N$ is recursively defined as: $\langle l, \emptyset \rangle \in BGO_N$ for $l \in L_N$; $\langle l,\{(a_i,bgo_i) \mid i \in 1..n\}\rangle \in BGO_N$ for every $n \in \mathbb{N}, a_i \in Act$ and $bgo_i \in BGO_N$.
\item The set $BGO_N(p)$ of bgo's of a process \textit{p} corresponding to the constraint \textit{N} is $BGO_N(p)=\{\langle L_N(p),S\rangle$ $\mid S\subseteq \{(a,bgo) | bgo \in BGO_N(p^\prime), p \stackrel{a}{\rightarrow}p^\prime\}\}.$ We write $p\leq_N^bq$ if $BGO_N(p)\subseteq BGO_N(q)$.
\end{enumerate}
\end{definition}

\begin{theorem}[\cite{fgp09_observational}] \label{teorema_preliminares}
 For all $N\in\{U,C,I,T,S\}$ and any two processes \textit{p} and \textit{q}, \ $p\sqsubseteq_{NS}q$ iff $p\leq_N^b q\,$.
\end{theorem}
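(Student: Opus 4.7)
The plan is to prove the two directions separately. The forward implication $p \sqsubseteq_{NS} q \Rightarrow p \leq_N^b q$ will reduce to a routine structural induction on the tree shape of a bgo, while for the reverse implication I intend to take the relation $\leq_N^b$ itself as the candidate witnessing $N$-constrained simulation and check that it satisfies the two conditions of Definition \ref{sim:restringida}.

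For the forward direction I fix an $N$-constrained simulation $S_N$ with $pS_Nq$ and proceed by induction on a bgo $\langle L_N(p), S\rangle \in BGO_N(p)$. The condition $S_N \subseteq N$ gives $L_N(p) = L_N(q)$, so the root labels match automatically. For each child $(a, bgo') \in S$, the subtree $bgo'$ comes from some $p \stackrel{a}{\rightarrow} p'$ with $bgo' \in BGO_N(p')$; the transfer property of $S_N$ provides $q \stackrel{a}{\rightarrow} q'$ with $p'S_Nq'$, and the induction hypothesis then places $bgo'$ inside $BGO_N(q')$. Reassembling, $\langle L_N(p), S\rangle \in BGO_N(q)$.

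For the reverse direction I set $R = \{(p,q) \mid p \leq_N^b q\}$ and verify that $R$ is an $N$-constrained simulation. The condition $R \subseteq N$ will be immediate: if $pRq$ then the trivial bgo $\langle L_N(p), \emptyset\rangle$ sits in $BGO_N(p) \subseteq BGO_N(q)$, forcing $L_N(p) = L_N(q)$. For the transfer property, given $pRq$ and $p \stackrel{a}{\rightarrow} p'$, one needs a single $q$-successor $q'$ with $p' \leq_N^b q'$. For any individual $t \in BGO_N(p')$ the bgo $\langle L_N(p), \{(a,t)\}\rangle$ lies in $BGO_N(p) \subseteq BGO_N(q)$, which yields some $a$-successor of $q$ covering $t$; the hard part will be obtaining a single successor that covers every $t \in BGO_N(p')$ simultaneously.

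This is where image-finiteness becomes crucial. Let $q_1, \ldots, q_k$ be the finitely many $a$-successors of $q$, and suppose for contradiction that no $q_i$ satisfies $p' \leq_N^b q_i$; pick witnesses $t_i = \langle L_N(p'), S_i\rangle \in BGO_N(p') \setminus BGO_N(q_i)$. The key move is to form the combined bgo $T := \langle L_N(p'), S_1 \cup \cdots \cup S_k\rangle$, which remains an element of $BGO_N(p')$ because each pair in the union is already licensed by some $p'$-transition. Applying $p \leq_N^b q$ to $\langle L_N(p), \{(a, T)\}\rangle$ then forces some $q_j$ to satisfy $T \in BGO_N(q_j)$; but $S_j \subseteq S_1 \cup \cdots \cup S_k$, so the same $q_j$-successors that certify $T$ also certify $t_j = \langle L_N(p'), S_j\rangle \in BGO_N(q_j)$, contradicting the choice of $t_j$. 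The main obstacle in the whole argument is precisely this combination trick at the root of $p'$: it is the one place where image-finiteness is genuinely needed, and without it the separate witnesses produced for each individual $t$ cannot be merged into a single coherent successor.
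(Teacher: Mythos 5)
The paper imports this theorem from \cite{fgp09_observational} and gives no proof of its own, so there is nothing internal to compare against; judged on its own, your argument is correct and is the standard one for results of this kind. Both directions are sound: the forward direction by induction on the (finite) tree structure of a bgo, and in the reverse direction the merging step $T=\langle L_N(p'),\, S_1\cup\dots\cup S_k\rangle$ --- legitimate because $BGO_N(q_j)$ is closed under shrinking the set of children --- correctly isolates the one place where the image-finiteness restriction (which the paper states just before the theorem) is genuinely needed.
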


\begin{definition}\label{def:lineales}
\begin{enumerate}
\item The set $LGO_N$ of linear general observations (lgo for short) for the set of local observations $L_N$ is the subset of $BGO_N$ defined as: $\langle l, \emptyset\rangle \in LGO_N$ for each $l \in L_N$;  $\langle l,\{(a,lgo)\}\rangle$ whenever $a \in Act$ and $lgo \in LGO_N$.
\item The set $LGO_N(p)$ of lgo's of a process \textit{p} with respect to the set of local observations $L_N$ is $LGO_N(p)=BGO_N(p)\cap LGO_N.$
\end{enumerate}
\end{definition}

\begin{definition} \label{ordenes}
For $\zeta,\zeta^\prime\subseteq LGO_N$, we define the orders $\leq_N^{l}$, $\leq_N^{l\supseteq}$, $\leq_N^{lf}$, and $\leq_N^{lf\supseteq}$ by:
\begin{itemize}
\item $\zeta\leq_N^{l}\zeta^\prime$ $\stackrel{def}{\Leftrightarrow}$ $\zeta\subseteq \zeta^\prime$.
\item $\zeta\leq_N^{l\supseteq}\zeta^\prime$ $\stackrel{def}{\Leftrightarrow}$ $\forall$ $X_0a_1X_1\ldots X_n \in \zeta$ $\exists$ $Y_0a_1Y_1\ldots Y_n \in \zeta^\prime$ $\forall i \in 0..n$ $X_i\supseteq Y_i$.
\item $\zeta\leq_N^{lf}\zeta^\prime$ $\stackrel{def}{\Leftrightarrow}$ $\forall$ $X_0a_1X_1\ldots X_n \in \zeta$ $\exists$ $Y_0a_1Y_1\ldots Y_n \in \zeta^\prime$ $X_n=Y_n$.
\item $\zeta\leq_N^{lf\supseteq}\zeta^\prime$ $\stackrel{def}{\Leftrightarrow}$ $\forall$ $X_0a_1X_1\ldots X_n \in \zeta$ $\exists$ $Y_0a_1Y_1\ldots Y_n \in \zeta^\prime$ $X_n\supseteq Y_n$.
\end{itemize}
\end{definition}

\begin{definition}
Given two processes p and q and $Z \in \{ l, l\supseteq, lf, lf\supseteq\}$, we write $p\leq_N^Zq$ \ iff \ $LGO_N(p)\leq_N^Z LGO_N(q)$. We will denote the corresponding equivalence by $=_N^Z$.
\end{definition}

In the cases in which there is no previously known (equivalent) definition for our new semantics, the definition above will give us ``the'' definition of each one of these new semantics; instead, each of the linear semantics in the old spectrum has a companion in our \emph{enlarged spectrum}. For instance, the linear semantics in the diamond to the right of RS (see Figure \ref{spectrum_fgp}) satisfy the following theorem.

\begin{theorem} \label{ordenes_semanticas}
(1) $p\sqsubseteq_{RT}q$ iff $p\leq_I^lq$; \ \ (2) $p\sqsubseteq_{FT}q$ iff $p\leq_I^{l\supseteq}q$; \ \ (3) $p\sqsubseteq_{R}q$ iff $p\leq_I^{lf}q$;  \ \ (4) $p\sqsubseteq_{F}q$ iff $p\leq_I^{lf\supseteq}q$.
\end{theorem}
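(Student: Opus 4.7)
I would start by unpacking $LGO_I(p)$ concretely. From Definition \ref{def:lineales} together with $L_I(p)=I(p)$, an element of $LGO_I(p)$ is exactly a sequence $X_0a_1X_1\cdots a_nX_n$ such that there is a run $p=p_0\xrightarrow{a_1}p_1\cdots\xrightarrow{a_n}p_n$ with $X_i=I(p_i)$ for every $i$. These are precisely the classical \emph{ready traces} of $p$ in the sense of \cite{Gla01}. Part (1) is then immediate, since $\sqsubseteq_{RT}$ is defined as inclusion of ready-trace sets and $\leq_I^l$ is defined as plain set inclusion of $LGO_I$.

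For parts (2)--(4) the key tool is the standard duality between ready sets and refusal sets. Recall that a failure trace of $p$ is a sequence $(R_0,a_1,R_1,\ldots,R_n)$ witnessed by some run $p=p_0\xrightarrow{a_1}\cdots\xrightarrow{a_n}p_n$ satisfying $R_i\cap I(p_i)=\emptyset$ for every $i$ (for failures the constraint is kept only at $i=n$; for readiness one instead fixes the ready set at position $n$). Both $FT(p)$ and $F(p)$ are downward closed in their refusal components, so each such set is determined by its \emph{maximal} elements, which are in bijection with ready traces via $R_i=Act\setminus X_i$. The condition $Y_i\subseteq X_i$ in $\leq_I^{l\supseteq}$ (respectively in $\leq_I^{lf\supseteq}$) corresponds exactly, under complementation, to the refusal inclusion $Act\setminus Y_i\supseteq Act\setminus X_i$.

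Concretely, for (2), $(\Rightarrow)$ take $X_0a_1\cdots X_n\in LGO_I(p)$; its associated maximal failure trace is in $FT(p)\subseteq FT(q)$, so it is realised in $q$ by some run with ready sets $Y_0,\ldots,Y_n$ satisfying $Y_i\cap(Act\setminus X_i)=\emptyset$, i.e., $Y_i\subseteq X_i$. $(\Leftarrow)$ any failure trace of $p$ is dominated by some $X_0a_1\cdots X_n\in LGO_I(p)$ with $R_i\subseteq Act\setminus X_i$; the hypothesis supplies $Y_0a_1\cdots Y_n\in LGO_I(q)$ with $Y_i\subseteq X_i$, giving $R_i\cap Y_i=\emptyset$ and so the same failure trace in $FT(q)$. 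Parts (3) and (4) follow from the very same dictionary, applied only at the final position, to match the clauses $Y_n=X_n$ of $\leq_I^{lf}$ and $Y_n\subseteq X_n$ of $\leq_I^{lf\supseteq}$, respectively.

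The only step I expect to require genuine care is the bookkeeping of the direction of inclusions across complementation (ready sets shrink exactly when refusal sets grow, and the classical preorders go in the ``failure-inclusion'' direction). Once the bijection between ready traces of $p$ and maximal elements of $FT(p)$ is made explicit and the downward closure of $FT$ and $F$ is invoked, each of the four equivalences collapses to a one-line translation; no induction or approximation is needed, because everything lives at the level of finite linear observations.
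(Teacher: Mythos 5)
Your proof is correct. Note, however, that the paper itself gives no proof of Theorem~\ref{ordenes_semanticas}: it is one of the results imported from the companion work on observational semantics \cite{fgp09_observational} (the paper explicitly says most proofs were removed), so there is no in-paper argument to compare against. Your route is the natural one and it goes through: unpacking Definition~\ref{def:lineales} with $L_I(p)=I(p)$ does identify $LGO_I(p)$ with the set of ready traces of $p$, which settles (1) by definition of $\sqsubseteq_{RT}$ and $\leq_I^l$; and the complementation dictionary $R_i = Act\setminus X_i$ together with downward closure of refusal components handles (2)--(4), with the inclusions pointing the right way ($Y_i\subseteq X_i$ iff $Act\setminus Y_i\supseteq Act\setminus X_i$, so a refusal realizable after shrinking the ready set is still realizable). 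The only detail worth making explicit is that van Glabbeek's failure traces are defined via a refusal \emph{relation} on an extended LTS, so a failure trace may contain consecutive or missing refusal sets rather than exactly one between each pair of actions; you need the (one-line) normalization that consecutive refusals at a state can be merged into their union and absent ones padded with $\emptyset$ before your bijection with maximal elements applies. With that remark added, the argument is complete; no induction is needed, exactly as you say.
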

\section{A new logical characterization of the most popular semantics} \label{logical_characterizations_popular_semantics}

Next we will present in a uniform way the new logics that characterize the different semantics. Each of them is defined by a set of rules, and as usual we assume that only the formulas that can be obtained by finite application of these rules are in the defined logics. We begin by studying the particular cases of the best known classical semantics, that is, those at the layer of Ready Simulation in the \emph{enlarged spectrum}. All of them use in some way the set of formulas $\mathcal{L}_I=\{a\top \mid a \in Act\}$ that characterizes the initial offers of a process. In Section \ref{logical_characterizations_coarsest_semantics}, we will present the logics for the rest of the semantics in a unified way. 

\begin{definition}\label{def_logica}
\textbf{Ready Simulation semantics (RS)}: we define the set of formulas $\mathcal{L}^{\prime}_{RS}$ for ready simulation semantics by \bran{\conjsimold{RS}{I}}{RS}. \\
\textbf{Ready traces semantics (RT)}: we define the set of formulas $\mathcal{L}^{\prime}_{RT}$ for ready trace semantics by \lin{\conjyformparticular{RT}{I}}{RT} \\
\textbf{Failure traces semantics (FT)}: we define the set of formulas $\mathcal{L}^{\prime}_{FT}$ for failure traces semantics by \linfparticular{I}{FT} \\
\textbf{Readiness semantics (R)}: we define the set of formulas $\mathcal{L}^{\prime}_{R}$ for readiness semantics by \lincparticular{I}{R} \\
\textbf{Failures semantics (F)}: we define the set of formulas $\mathcal{L}^{\prime}_{F}$ for failures semantics by \linfcparticular{I}{F}
\end{definition}

One can immediately check in the definition above that $\mathcal{L}^{\prime}_{RS}\subseteq \mathcal{L}_B$, thus obtaining that Ready Simulation semantics is coarser than Bisimulation equivalence. We also have $\mathcal{L}^{\prime}_{F}\subseteq \mathcal{L}^{\prime}_{R}$, $\mathcal{L}^{\prime}_{F}\subseteq \mathcal{L}^{\prime}_{FT}$, $\mathcal{L}^{\prime}_{R}\subseteq \mathcal{L}^{\prime}_{RT}$, $\mathcal{L}^{\prime}_{FT}\subseteq \mathcal{L}^{\prime}_{RT}$ and $\mathcal{L}^{\prime}_{RT}\subseteq \mathcal{L}^{\prime}_{RS}$, which can be interpreted in the same way. Let us now focus our attention on the third rule of the definition of $\mathcal{L}^{\prime}_{RS}$: the unrestricted use of conjunction corresponds to the branched character of the semantics. Moreover, the two first rules allow us to fix the set of offers at the states of the process as \textit{I}-simulations impose. Instead, the linear semantics only allow the use of conjunction to join the simple formulas that permit us to fix the set of offers along a computation in the case of the readies-based semantics, or their over-approximations (obtained by means of the negated formulas $\neg a\top$), in the case of the failures-based semantics. Finally, notice how these simple formulas can only be checked at the end, for the simpler coarser semantics.

Now, for $X \in \{RS, RT, FT, R, F\}$ we can prove that each of the logics, $\mathcal{L}^{\prime}_{X}$, is a superset of the corresponding logic, $\mathcal{L}_{X}$, defined by van Glabbeek in \cite{Gla01}. To be precise, for the cases of FT and F semantics we need to remove the syntactic sugar used by van Glabbeek.

\begin{Proposition}
\begin{enumerate}
\item $\mathcal{L}^{\prime}_{RS}\supseteq \mathcal{L}_{RS}$. We also have $\mathcal{L}_{RS}\varsubsetneq \mathcal{L}^{\prime}_{RS}\,$.
\item $\mathcal{L}^{\prime}_{RT}\supseteq \mathcal{L}_{RT}$. We also have $\mathcal{L}_{RT}\subsetneq \mathcal{L}^{\prime}_{RT}\,$. 
\item $\mathcal{L}^{\prime}_{FT}\supseteq$ desugared$(\mathcal{L}_{FT})$, where the desugaring function removes the syntactic sugar used in $\mathcal{L}_{FT}\,$.
\item $\mathcal{L}^{\prime}_{R}\supseteq \mathcal{L}_{R}$. We also have $\mathcal{L}_{R}\varsubsetneq \mathcal{L}^{\prime}_{R}\,$.
\item $\mathcal{L}^{\prime}_{F}\supseteq$ desugared($\mathcal{L}_{F}$), where the desugared function removes the syntactic sugar used in $\mathcal{L}_F\,$.
\end{enumerate}
\end{Proposition}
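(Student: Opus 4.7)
The plan is to prove each inclusion $\mathcal{L}_X\subseteq\mathcal{L}'_X$ by a short structural induction on van Glabbeek's generating rules for $\mathcal{L}_X$. The constructors that are shared by both grammars---the constant $\top$ (the empty conjunction in either case), the prefix $\varphi\mapsto a\varphi$, and conjunction (general in the branching case $X=RS$ and of the linear shape in the remaining cases)---are either literally the same or simulated directly by a matching clause of $\mathcal{L}'_X$. The whole proof therefore reduces to handling the base atoms specific to van Glabbeek's presentation: the ready formula $X$, the failure formula $\widetilde{X}$, and their trace variants $X\varphi$ and $\widetilde{X}\varphi$.

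For these base cases the treatment is uniform: one first unfolds the syntactic sugar to the primitive form $X=\bigwedge_{a\in X}a\top\wedge\bigwedge_{a\notin X}\neg a\top$, $\widetilde{X}=\bigwedge_{a\in X}\neg a\top$, $X\varphi=X\wedge\varphi$, $\widetilde{X}\varphi=\widetilde{X}\wedge\varphi$. Every conjunct is then an $a\top$ or a $\neg a\top$: in $\mathcal{L}'_{RS}$ such atoms are generated directly by the clause $\sigma\in\mathcal{L}_I\Rightarrow\sigma,\neg\sigma\in\mathcal{L}'_{RS}$ and combined by the unrestricted conjunction clause; in $\mathcal{L}'_{RT}$, $\mathcal{L}'_{FT}$, $\mathcal{L}'_R$ and $\mathcal{L}'_F$ the whole packaged conjunction of atoms is produced by the $(X_1,X_2)$-indexed clauses of Definition~\ref{def_logica}, after choosing $X_1=\{a\top\mid a\in X\}$ and, when applicable, $X_2=\{a\top\mid a\notin X\}$. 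This last step is precisely what the ``desugaring'' in parts 3 and 5 accomplishes, so those two parts hold by the same unfolding.

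For the strict inclusions claimed in parts 1, 2 and 4 we exhibit explicit witnesses. The bare failure atom $\neg a\top$ belongs to $\mathcal{L}'_{RS}$ and to $\mathcal{L}'_{RT}$ by the clauses just recalled (taking $\varphi=\top$ where needed), yet it is not derivable in van Glabbeek's $\mathcal{L}_{RS}$ or $\mathcal{L}_{RT}$ over an action set $Act$ of size at least two, because the only way those grammars introduce negation is inside the saturated ready sugar $X$, whose unfolding always contains a conjunct indexed by every $b\in Act$. For part 4 the same role is played by a partial readiness formula such as $a\top\wedge\neg b\top$ with $a\neq b$: Definition~\ref{def_logica} places it in $\mathcal{L}'_R$ by choosing $X_1=\{a\top\}$ and $X_2=\{b\top\}$, whereas no derivation in $\mathcal{L}_R$ produces a conjunction of ready atoms that fails to cover all of $Act$.

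I expect the main obstacle to be the strictness side, since ruling a formula out of van Glabbeek's grammar is a syntactic closure statement rather than a semantic one. The cleanest supporting ingredient is a small structural invariant for $\mathcal{L}_{RS}$ (and analogously $\mathcal{L}_R$): every subformula that is a conjunction of readies-atoms must mention, for each $a\in Act$, either $a\top$ or $\neg a\top$. This invariant is preserved by each of van Glabbeek's rules but is visibly broken by the witnesses above, which finishes the argument.
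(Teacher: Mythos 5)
Your proposal is correct and follows essentially the same route as the paper: unfold van Glabbeek's sugared atoms $X$, $\widetilde{X}$, $X\varphi$, $\widetilde{X}\varphi$ into conjunctions of $a\top$ and $\neg a\top$, match them with the $(X_1,X_2)$-clauses of Definition~\ref{def_logica} (e.g.\ $X_1=X$, $X_2=\overline{X}$), and witness strictness with a non-saturated formula such as $(\neg b\top)\wedge\varphi$. The paper only writes out case 2 and merely asserts the non-membership in $\mathcal{L}_{RT}$, so your explicit saturation invariant and the $|Act|\ge 2$ caveat are welcome additions rather than deviations.
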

\begin{proof}
All of them are simple and similar, so we will only present the proof of 2.
\begin{itemize}
\item$\underline{2|}$ To prove that $\mathcal{L}^{\prime}_{RT} \supseteq \mathcal{L}_{RT}$ it is sufficient to show that for every $X \subseteq Act$ and any $\varphi \in \mathcal{L}_{RT}$, the formula $(\bigwedge_{af \in X} a\top \wedge \bigwedge_{b \notin X} \neg b\top) \wedge \varphi$ belongs to $\mathcal{L}^{\prime}_{RT}$. Note that $b \notin X$ is equivalent to $b \in \overline{X}$, so taking $X_1=X$ and $X_2=\overline{X}$ we have that the considered formula belongs to $\mathcal{L}^{\prime}_{RT}$. To prove that  $\mathcal{L}_{RT} \subset \mathcal{L}^{\prime}_{RT}$, it is sufficient to note that $(\neg b\top) \wedge \varphi$ belongs to $\mathcal{L}^{\prime}_{RT}\,$, by simply taking $X_1=\emptyset$ and $X_2=\{b\}$, but it does not belong to $\mathcal{L}_{RS}$.
\end{itemize}    
\end{proof}

We have said in our Introduction that our logics are chosen as large as necessary, to obtain more natural characterizations. This is why, in most of the cases, we have obtained a logic larger than that proposed by van Glabbeek. In order to prove the equivalences between ours and van Glabbeek's logics, we have to show that the new formulas that we included in our logics are in fact redundant.
\begin{Proposition} \label{equiv_entre_logicas}
We have (1) $\mathcal{L}_{RS}\sim \mathcal{L}^{\prime}_{RS}$; (2) $\mathcal{L}_{RT}\sim \mathcal{L}^{\prime}_{RT}$;  (3) $\mathcal{L}_{FT}\sim \mathcal{L}^{\prime}_{FT}$; (4) $\mathcal{L}_{R}\sim \mathcal{L}^{\prime}_{R}$ and (5) $\mathcal{L}_{F}\sim \mathcal{L}^{\prime}_{F}$.
\end{Proposition}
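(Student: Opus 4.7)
The strategy is to establish the reverse inclusion $\sqsubseteq_{\mathcal{L}_X}\subseteq\sqsubseteq_{\mathcal{L}'_X}$ in each case, since the previous proposition already yields $\mathcal{L}_X\subseteq\mathcal{L}'_X$ (up to desugaring) and therefore $\sqsubseteq_{\mathcal{L}'_X}\subseteq\sqsubseteq_{\mathcal{L}_X}$. The uniform way to do this is to show that every $\varphi'\in\mathcal{L}'_X$ is logically equivalent to some formula of $\mathcal{L}_X^{\vee}$; once this is established, Proposition \ref{disyuncion_teo} (which gives $\mathcal{L}_X^{\vee}\sim\mathcal{L}_X$) closes the chain $\sqsubseteq_{\mathcal{L}_X}=\sqsubseteq_{\mathcal{L}_X^{\vee}}\subseteq\sqsubseteq_{\mathcal{L}'_X}\subseteq\sqsubseteq_{\mathcal{L}_X}$.

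For FT and F the argument is purely syntactic: after unwrapping the syntactic sugar, the clause $\widetilde{X_1}\wedge\varphi$ of $\mathcal{L}'_{FT}$ coincides with van Glabbeek's $\widetilde{X_1}\varphi$, and similarly for $\mathcal{L}'_F$, so no disjunction is needed. For the other three cases (RS, RT and R), the new formulas only partially specify a ready set, whereas van Glabbeek's atoms $X\in\mathcal{L}_X$ describe it completely, and we genuinely need Proposition \ref{disyuncion_teo} to absorb the resulting disjunctions. Specifically, $a\top\equiv\bigvee_{X\ni a}X$, $\neg a\top\equiv\bigvee_{X\not\ni a}X$, and a partial specification $(\bigwedge_{a\in X_1}a\top\wedge\bigwedge_{b\in X_2}\neg b\top)\wedge\varphi$ is equivalent to $\bigvee_{Y:\,X_1\subseteq Y\subseteq Act\setminus X_2}(Y\wedge\psi)$, where $\psi\in\mathcal{L}_X^{\vee}$ is obtained inductively from $\varphi$ (noting that if $X_1\cap X_2\neq\emptyset$ the original formula is simply $\bot$, i.e.\ the empty disjunction).

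The rewriting is carried out by structural induction on $\varphi'\in\mathcal{L}'_X$, floating disjunctions outwards through the prefix and conjunction connectives, using the distributivity laws already noted in the remark after Proposition \ref{disyuncion_teo}: $a\bigvee_i\varphi_i\equiv\bigvee_i a\varphi_i$ and, in the branching RS case, the general $\bigwedge_i\bigvee_{j\in J_i}\psi_{i,j}\equiv\bigvee_{f\in\prod_i J_i}\bigwedge_i\psi_{i,f(i)}$. The main obstacle is precisely this last step for RS, since the logic allows arbitrary (possibly infinite) conjunctions and the resulting disjunction can be very wide; what makes it work is that $\mathcal{L}_{RS}$ is itself closed under arbitrary conjunction, so each inner $\bigwedge_i\psi_{i,f(i)}$ is a bona fide $\mathcal{L}_{RS}$-formula. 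For the linear cases RT and R no such deep nesting arises because the conjunction connective is restricted to joining a ready witness to its continuation, so only a single layer of disjunction needs to be pushed at each prefix step and the induction collapses into an elementary local rewrite.
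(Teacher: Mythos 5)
Your proposal is correct and follows essentially the same route as the paper: note that van Glabbeek's formulas are the special cases with complete offer specifications, rewrite the partial specifications of $\mathcal{L}'_X$ as disjunctions over all compatible complete offers $X_1\subseteq Y\subseteq\overline{X_2}$, float the disjunctions outwards, and close with Proposition \ref{disyuncion_teo}. The paper only writes out case (2) and leaves the others as ``simple and similar,'' so your explicit treatment of the desugaring for FT/F and of the distribution over arbitrary conjunction for RS is a welcome but not divergent elaboration.
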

\begin{proof}
As above we will only present one of the proofs.
\begin{itemize}
\item$\underline{2|}$ We have seen that the formulas in $\mathcal{L}_{RT}$ are particular cases of the formulas in $\mathcal{L}^{\prime}_{RT}$, those that totally define the offers at the states along a computation (when we apply the second clause in the definition of $\mathcal{L}^{\prime}_{RT}$ taking $X_2 = \overline{X_1}$).
Instead, our more general formulas $(\bigwedge_{a\top \in X_1} a\top \wedge \bigwedge_{b\top \in X_2}\neg b\top) \wedge \varphi$, where $\varphi \in \mathcal{L}^{\prime}_{RT}$, could give us some partial information, combining both positive information $a\top \in X_1$ and negative information $b\top \in X_2$, which tells us that we are in an arbitrary state $X$, satisfying $X_1 \subseteq X \subseteq \overline{X_2}$. But we can replace these formulas by the disjunction of all the formulas describing any of these possible offers $X$. By repeating this procedure at each level of the formula, we finally obtain a disjunction of formulas in $\mathcal{L}_{RT}$. To conclude, it is enough to apply Prop. \ref{disyuncion_teo}.
\end{itemize}
\end{proof}

In the following, when we consider a logic $\mathcal{L}_{Z}$ and the index $Z$ refers to some concrete semantics, as is the case with $RS$, $RT$, $FT$, $R$, $F$ above, by abuse of notation we will simply write $\sqsubseteq^{\prime}_{Z}$ instead of $\sqsubseteq_{\mathcal{L}^{\prime}_{Z}}$ when referring to the preorder induced by the logic $\mathcal{L}^{\prime}_Z$.

\begin{theorem}\label{teoequiv}
\begin{enumerate}
\item The logical semantics $\sqsubseteq^{\prime}_{RS}$ induced by the logic $\mathcal{L}^{\prime}_{RS}$ is equivalent to the observational branching semantics defined by $\leq_{I}^{b}$, generated by the set of branching general observations $BGO_I$.
\item The logical semantics  $\sqsubseteq^{\prime}_{RT}$ (resp. $\sqsubseteq^{\prime}_{FT}$, $\sqsubseteq^{\prime}_{R}$,  $\sqsubseteq^{\prime}_{F}$) induced by the logic $\mathcal{L}^{\prime}_{RT}$ (resp. $\mathcal{L}^{\prime}_{FT}$, $\mathcal{L}^{\prime}_{R}$, $\mathcal{L}^{\prime}_{F}$) is equivalent to the observational linear semantics defined by the domain of linear general observations $LGO_I$, ordered by $\leq_I^{l}$ (resp.  $\leq_I^{l\supseteq}$, $\leq_I^{lf}$, $\leq_I^{lf\supseteq}$,) defined at Def. \ref{ordenes}.
\end{enumerate}
\end{theorem}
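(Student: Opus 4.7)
The plan is to chain three already-established equivalences, separately for each semantics $Z \in \{RS, RT, FT, R, F\}$. First, Prop. \ref{equiv_entre_logicas} yields $\mathcal{L}^{\prime}_Z \sim \mathcal{L}_Z$, so $\sqsubseteq^{\prime}_Z$ coincides with the preorder induced by van Glabbeek's original logic. Second, the classical results of van Glabbeek \cite{Gla01} recalled in Table \ref{logic_table} guarantee that $\mathcal{L}_Z$ in turn induces precisely the standard semantic preorder $\sqsubseteq_Z$ (ready simulation, ready traces, failure traces, readiness, failures, respectively). Third, I would invoke the observational characterizations transcribed earlier: Th. \ref{teorema_preliminares} applied with $N = I$, combined with the observation from the preliminaries that $\sqsubseteq_{RS} = \sqsubseteq_{IS}$, settles part (1) by giving $\sqsubseteq_{RS} \ =\ \leq_I^b$; and Th. \ref{ordenes_semanticas} directly supplies the matching linear observational order on $LGO_I$ for each of $RT$, $FT$, $R$, $F$, settling part (2). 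Composing these three equivalences yields the theorem.

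In this approach most of the conceptual content is distributed among the earlier results, so the theorem functions as a ``pasting'' lemma. The only step that requires genuine work is Prop. \ref{equiv_entre_logicas}, where one must show that the extra formulas admitted by $\mathcal{L}^{\prime}_Z$ beyond $\mathcal{L}_Z$ are in fact redundant. As illustrated there for the $RT$ case, a partial-offer formula such as $(\bigwedge_{a\top \in X_1} a\top \wedge \bigwedge_{b\top \in X_2} \neg b\top) \wedge \varphi$ can be rewritten as a disjunction of total-offer formulas (one for each $X$ with $X_1 \subseteq X \subseteq \overline{X_2}$), and by Prop. \ref{disyuncion_teo} disjunction does not increase expressive power. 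Once that rewriting is in hand for every link in the chain, the conclusion is purely formal.

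The hard part, if anything, will not be technical but organisational: one must carefully match each of the five logics with the corresponding semantic preorder on one side, and with the correct observational order among $\leq_I^b$, $\leq_I^l$, $\leq_I^{l\supseteq}$, $\leq_I^{lf}$, $\leq_I^{lf\supseteq}$ on the other. A self-contained alternative, if one wished to avoid citing \cite{Gla01}, would be to translate between formulas and observations directly: by induction on the structure of a branching/linear general observation in $BGO_I$ or $LGO_I$, build a formula in $\mathcal{L}^{\prime}_Z$ satisfied by exactly those processes that produce that observation, and conversely encode each formula as the set of observations that distinguish it. That route, however, essentially duplicates the proofs of Th. \ref{teorema_preliminares} and Th. \ref{ordenes_semanticas}, so chaining the existing equivalences is both shorter and more informative, and I would present the argument in that concise form.
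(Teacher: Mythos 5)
Your proof matches the paper's own argument exactly: the paper proves Theorem \ref{teoequiv} by citing precisely the same chain of results, namely Prop. \ref{equiv_entre_logicas}, van Glabbeek's characterizations collected in Table \ref{logic_table}, Th. \ref{teorema_preliminares}, and Th. \ref{ordenes_semanticas}. Your additional remarks on where the real work lies (in Prop. \ref{equiv_entre_logicas}) and on the need to match each logic with the correct observational order are accurate elaborations of that same strategy.
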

\begin{proof}
It is a consequence of  Prop. \ref{equiv_entre_logicas}, the results by van Glabbeek collected in Table \ref{logic_table}, Th. \ref{teorema_preliminares} and Th. \ref{ordenes_semanticas}.
\end{proof}

\begin{figure}[h]
\scalebox{0.47}{\huge
$\begin{array}{c c@{\hspace{3.25ex}\vline\hspace{3.25ex}} c c}
\textit{P}_1&\textit{P}_2 &\textit{P}_5 &\textit{P}_6\\ &&&\\
\xymatrix{
& &\bullet\ar[dl]_a \ar[d]_a\ar[drr]^a& & &\\
&\bullet\ar[d]_b  &\bullet & & \bullet \ar[dl]_b \ar[d]_c\ar[dr]^e &\\
& \bullet & &\bullet \ar[d]_d &\bullet &\bullet\\
& & & \bullet & &}
&
\xymatrix{
& & &\bullet\ar[dll]_a \ar[d]_a\ar[drr]^a& & &\\
&\bullet\ar[d]_b & &\bullet \ar[dl]_b \ar[d]_c & & \bullet \ar[dl]_b \ar[d]_c\ar[dr]^e &\\
&\bullet & \bullet \ar[d]_d&\bullet &\bullet \ar[d]_d &\bullet &\bullet\\
& & \bullet & & \bullet & &}
&
\xymatrix{
&\bullet \ar[dl]_a\ar[d]_a\ar[dr]^a&\\ 
\bullet \ar[d]_b& \bullet \ar[d]_b& \bullet \ar[d]^b\\ 
\bullet \ar[d]_c&\bullet \ar[dl]_c \ar[dr]^d& \bullet \ar[d]^d\\ 
\bullet &&\bullet }
&
\xymatrix{
&\bullet\ar[dl]_a\ar[dr]^a &\\
\bullet\ar[d]_b& & \bullet\ar[d]^b\\
\bullet\ar[d]_c& & \bullet \ar[d]^d\\
\bullet&&\bullet}
\\
&&&
\\
\textit{P}_3&\textit{P}_4 &\textit{P}_7 &\textit{P}_8 \\&&& \\
\xymatrix{
& & &\bullet\ar[dl]_a \ar[d]_a\ar[drr]^a& & &\\
& &\bullet\ar[d]_b &\bullet \ar[d]_b & & \bullet \ar[dl]_b \ar[d]_c\ar[dr]^e &\\
& & \bullet & \bullet \ar[d]_d &\bullet \ar[d]_d &\bullet &\bullet\\
& & &\bullet & \bullet & &}
&
\xymatrix{
& &\bullet\ar[dll]_a \ar[dr]^a&&\\
\bullet\ar[d]_b& & & \bullet \ar[dl]_b \ar[d]_c\ar[dr]^e &\\
\bullet & & \bullet  \ar[d]_d&\bullet & \bullet \\
& &  \bullet  & &}
&
\xymatrix{
&\bullet\ar[d]_a&\\
&\bullet\ar[dl]_b \ar[dr]^b&\\
\bullet\ar[d]_c& & \bullet \ar[d]^d\\
\bullet&&\bullet}
&
\xymatrix{
& &\bullet\ar[dl]_a\ar[dr]^a& &\\
&\bullet\ar[dl]_b \ar[dr]^b& &\bullet \ar[d]_b&\\
\bullet\ar[d]_c& & \bullet \ar[d]^d & \bullet \ar[d]_c&\\
\bullet&&\bullet &\bullet &}
\end{array}$}
\caption{Example to show the strength of the different logics}\label{example1} 
\end{figure}
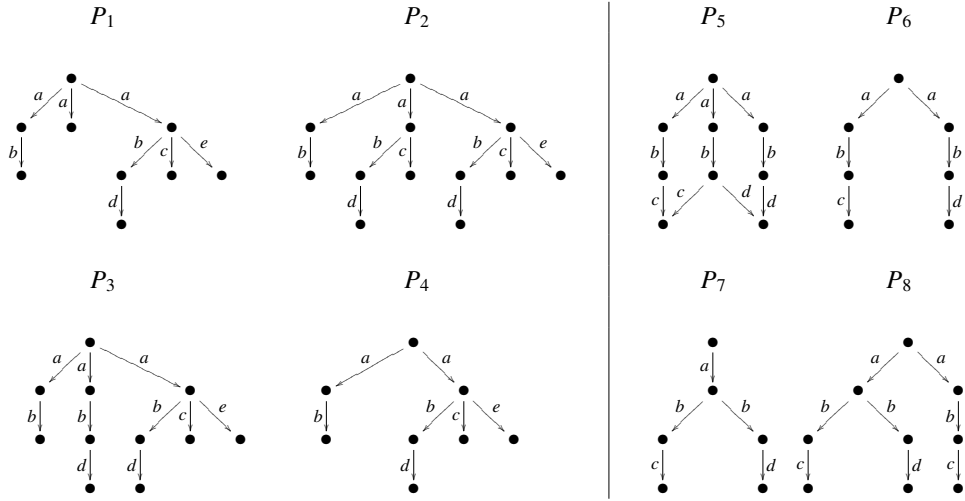
\normalsize
\begin{example}
Figure \ref{example1} shows a collection of  examples to illustrate the differences between the semantics in the layer of RS at the spectrum. All the stated equivalences can be checked by taking any arbitrary formula from the logic defining each of the semantics.
For readability, we omit the last $\top$ in all subformulas. Besides, $\sim_{X}$, (resp. $\nsim_{X}$) , where \emph{X} is a set of indexes, represents any $\sim_{Z}$ ( resp. $\nsim_{Z}$), with  $Z \in\emph{X}$. 
\begin{itemize}
\item $P_1\nsqsubseteq^{\prime}_{F}\!P_2$, and then $P_1\nsqsubseteq^{\prime}_{\{R,\;FT,\;RT,\;RS\}}\!P_2$; this is because $P_1\models a(\neg b \wedge \neg c)$, but $P_2$ does not.
\item $P_2\sim_{F}P_3$, but $P_2\nsqsubseteq^{\prime}_{\{R,\;FT\}}P_3$ and then $P_2\nsqsubseteq^{\prime}_{\{RT,\;RS\}}P_3$, using that $P_2\models a(\neg e\wedge c)$, but $P_3$ does not.
\item $P_3\!\!\sim_{\{F,\;R\}}\! \!P_4$, but $P_3\!\nsqsubseteq^{\prime}_{FT}\!\!P_4$ and then $P_3\!\nsqsubseteq^{\prime}_{\{RT,\;RS\}}\!\!P_4$, because $P_3\! \!\models \!a(\neg c \wedge b(\neg e \wedge d))$, but $P_4$ does not.
\item $P_5\sim_{\{F,\;FT\}}P_6$, but $P_5\nsqsubseteq^{\prime}_{R}P_6$ and then $P_5\nsqsubseteq^{\prime}_{\{RT,\;RS\}}P_6$, using that $P_5\models ab(c\wedge d)$, but $P_6$ does not.
\item $P_6\sim_{\{F,\;R,\;RT,\;FT\}}P_7$, but $P_7\nsqsubseteq^{\prime}_{RS}P_6$, using that $P_7\models a(bc\wedge bd)$, but $P_6$ does not.
\item $P_7\sim_{\{F,\;R,\;RT,\;FT,\;RS\}}P_8$.
\end{itemize}
\end{example}
\section{Our new unified logical characterizations of the semantics}\label{logical_characterizations_coarsest_semantics}
Inspired by the semantics studied in Section \ref{logical_characterizations_popular_semantics}, next we define the general format for the logics characterizing each of the semantics in the \emph{enlarged spectrum}. We start by enlarging the spectrum a bit more, to include all the elements needed to characterize the rest of the semantics in a systematic way.

\begin{definition}
\begin{enumerate}
\item \textbf{Universal semantics (U)}: We define the set of Universal formulas, $\mathcal{L}^{\prime}_{U}$, that characterizes the trivial semantics that identifies all the processes, by $\mathcal{L}^{\prime}_{U} = \{ \top \}$.
\vspace{-0.15cm}
\item \textbf{Complete semantics (C)}: It is defined by $\sqsubseteq_C$, taking $p\sqsubseteq_C q$ ::= ($p \stackrel{a}{\rightarrow}$ $\Rightarrow$ $\exists$ b $\in Act$ $q \stackrel{b}{\rightarrow}$). That is, it only distinguishes terminated processes (equivalent to \textbf{0}) from non-terminated ones. We define the set of Complete formulas $\mathcal{L}^{\prime}_{C}$ characterizing it, by $\mathcal{L}^{\prime}_{C} = \{ \top, \neg0\}$.
\item \textbf{Initial offer semantics (I)}: It is defined by $\sqsubseteq_I$, taking $p \sqsubseteq_I q$ ::= $I(p)\subseteq I(q)$. That is, it only observes the set of initial actions of a process, $I(p)=\{a \mid a\in Act \wedge p \stackrel{a}{\rightarrow}\}$. We define the set of Initial offer formulas $\mathcal{L}^{\prime}_{I}$ characterizing it, by $\mathcal{L}^{\prime}_{I} = \{ \top, \neg0 \}$ $\cup$ $\{a\top \mid a \in \emph{Act}\}$.
\end{enumerate}
\end{definition}

In the definition above the sub-formula $\neg 0$ is just syntactic sugar for the formula $\neg(\bigwedge_{a \in Act}\neg a \top)$. Therefore, all these new logics are indeed sublogics of $\mathcal{L}_{HM}$, and we do not need to define their semantics.

Note that $\mathcal{L}^{\prime}_I$ is a bit larger than the logic $\mathcal{L}_I$ used in Section \ref{logical_characterizations_popular_semantics}. Once again, this is so in order to get a more uniform presentation of our logics: $\neg 0$ is indeed redundant. As a consequence, we immediately obtain that the Complete semantics is coarser than the Initial offer semantics, because $\mathcal{L}^{\prime}_{C}$ $\subseteq$ $\mathcal{L}^{\prime}_{I}$. Based on this result we will also easily obtain that the Complete Simulation is coarser than the Ready Simulation.

\subsection{The simulation semantics}
As discussed in \cite{fgp09_observational}, the simulation semantics constitute the spine of the new spectrum. Moreover, all of them are defined in a homogeneous way using the notion of constrained simulation from \cite{fg08}.

\begin{definition}\label{logica_sim}
Given a set of formulas $\mathcal{L}^{\prime}_{N}$ defining a semantics N, we define the set of formulas $\mathcal{L}^{\prime}_{NS}$ that defines the \textit{N}-constrained simulation semantics by \brannew{\conjsim{NS}{N}}{NS}
\end{definition}

Taking $N\in\{U, C, I\}$ we obtain $\mathcal{L}^{\prime}_{US}$, $\mathcal{L}^{\prime}_{CS}$ and $\mathcal{L}^{\prime}_{IS}$, that in the first and last cases we rewrite as $\mathcal{L}^{\prime}_{S}$ and $\mathcal{L}^{\prime}_{RS}$, respectively, in order to emphasize the classic notation for simulation semantics. Once that we have $\mathcal{L}^{\prime}_{S}$ we can also obtain $\mathcal{L}^{\prime}_{SS}$, that we will also denote as $\mathcal{L}^{\prime}_{2S}$. To complete the collection of simulation semantics we will only need $\mathcal{L}^{\prime}_{TS}$, that will be based on $\mathcal{L}^{\prime}_{T}$, to be defined in the next section.

If we compare the definition above with the particular case of Ready Simulation in Def. \hspace{-0.07cm}\ref{def_logica}, the differences concern the two first rules, by means of which we impose that the process will traverse states which are in the corresponding N-equivalence class all along the tree of computations checked by a formula in $\mathcal{L}^{\prime}_{NS}$. Next we state the equivalence between our logics and those by van Glabbeek in \cite{Gla01}. 

\begin{Proposition}
We have (1) $\mathcal{L}^{\prime}_{S} \sim \mathcal{L}_{S}$, (2) $\mathcal{L}^{\prime}_{CS} \sim \mathcal{L}_{CS}$ and  (3) $\mathcal{L}^{\prime}_{2S} \sim \mathcal{L}_{2S}$.
\end{Proposition}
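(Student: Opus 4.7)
The plan is to treat all three parts with a common strategy: since $\sim$ is about the induced preorder, it will suffice to prove that every formula of $\mathcal{L}^{\prime}_X$ is pointwise equivalent to some formula of $\mathcal{L}^{\vee}_X$, and then invoke Prop.~\ref{disyuncion_teo} to obtain $\mathcal{L}^{\vee}_X\sim\mathcal{L}_X$, hence $\mathcal{L}^{\prime}_X\sim\mathcal{L}_X$. The reverse inclusion $\sqsubseteq^{\prime}_X\supseteq\sqsubseteq_{\mathcal{L}_X}$ is immediate in each case because $\mathcal{L}_X\subseteq\mathcal{L}^{\prime}_X$ modulo syntactic sugar.

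For part (1) I would first observe that the only genuinely new seed formula in $\mathcal{L}^{\prime}_S=\mathcal{L}^{\prime}_{US}$ is $\neg\top$, generated by the clause $\sigma\in\mathcal{L}^{\prime}_U\Rightarrow\neg\sigma\in\mathcal{L}^{\prime}_S$ applied to the unique element of $\mathcal{L}^{\prime}_U=\{\top\}$. Since $\neg\top$ is unsatisfiable, it is semantically equivalent to $\bot=\bigvee_{i\in\emptyset}\varphi_i$, a formula of $\mathcal{L}^{\vee}_S$. A routine structural induction on $\mathcal{L}^{\prime}_S$ (the conjunction and prefix cases being immediate, because $\mathcal{L}^{\vee}_S$ is closed under both) then assigns to each $\varphi\in\mathcal{L}^{\prime}_S$ an equivalent formula of $\mathcal{L}^{\vee}_S$. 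Part (2) would follow the same outline: the extra seeds in $\mathcal{L}^{\prime}_{CS}$ are $\neg\top$ (again equivalent to $\bot\in\mathcal{L}^{\vee}_{CS}$) and $\neg 0$, which is semantically equal to $\bigvee_{a\in Act}a\top\in\mathcal{L}^{\vee}_{CS}$; while $\neg\neg 0$ collapses to $\mathbf{0}\in\mathcal{L}_{CS}$, so no new content sneaks in through double negation.

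For part (3), I would first sharpen the output of (1): the ``floating away any $\vee$'' procedure used in the proof of Prop.~\ref{disyuncion_teo} shows that every formula of $\mathcal{L}^{\vee}_S$ is equivalent to a disjunction of $\mathcal{L}_S$-formulas, so the induction of (1) actually yields, for each $\sigma\in\mathcal{L}^{\prime}_S$, a representation $\sigma\equiv\bigvee_i\varphi_i$ with $\varphi_i\in\mathcal{L}_S$. I would then use de Morgan to translate every formula of the form $\neg\sigma$ admitted in $\mathcal{L}^{\prime}_{2S}$: $\neg\sigma\equiv\bigwedge_i\neg\varphi_i$, where each $\neg\varphi_i$ belongs to $\mathcal{L}_{2S}$ by van Glabbeek's definition, and the conjunction then does too. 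A structural induction on $\mathcal{L}^{\prime}_{2S}$ rewrites every one of its formulas as an equivalent element of $\mathcal{L}^{\vee}_{2S}$, and a last appeal to Prop.~\ref{disyuncion_teo} yields $\mathcal{L}^{\prime}_{2S}\sim\mathcal{L}_{2S}$.

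The delicate step is (3): the preorder-level equivalence $\mathcal{L}^{\prime}_S\sim\mathcal{L}_S$ from (1) is not by itself enough to translate the additional clause $\sigma\in\mathcal{L}^{\prime}_S\Rightarrow\neg\sigma\in\mathcal{L}^{\prime}_{2S}$, because negation is sensitive to pointwise satisfaction and not merely to the induced preorder. One must extract from (1) the stronger, \emph{pointwise} translation of each $\sigma\in\mathcal{L}^{\prime}_S$ into a disjunction of $\mathcal{L}_S$-formulas, and then rely on de Morgan to re-enter $\mathcal{L}_{2S}$ via conjunctions of negated $\mathcal{L}_S$-formulas; this is the only place where the argument does anything beyond a direct inductive rewriting.
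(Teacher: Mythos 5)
Your proposal is correct and follows the same technique the paper uses for its other logic-equivalence results (cf.\ the proof of Prop.~\ref{equiv_entre_logicas}): reduce each new formula pointwise to a disjunction of van Glabbeek formulas and invoke Prop.~\ref{disyuncion_teo}; the paper itself omits the proof of this particular proposition. You also correctly isolate the one genuinely delicate point, namely that for $\mathcal{L}^{\prime}_{2S}$ the preorder-level equivalence of part (1) must be upgraded to a pointwise translation of each $\sigma\in\mathcal{L}^{\prime}_{S}$ into a disjunction of $\mathcal{L}_{S}$-formulas before De Morgan can push the negation through into $\mathcal{L}_{2S}$.
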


\subsection{Logical characterization of the linear semantics} \label{logica_lineales}
We start by defining the closure operators, by means of which we are able to express to which extent conjunction and negation can be used at the logical characterizations of each of the linear semantics.

\begin{definition}
Given a logical set $\mathcal{L}^{\prime}_{N}$ with $N \in \{U,C,I,T,S\}$, we define:
\begin{enumerate}
\item \textbf{Its symmetric closure} $\mathcal{L}_{N}^{\equiv}$ by: $\sigma \in \mathcal{L}^{\prime}_{N} \Rightarrow \sigma \in \mathcal{L}_{N}^{\equiv}$ and $\neg \sigma \in \mathcal{L}_{N}^{\equiv}$; $\sigma_i \in \mathcal{L}_{N}^{\equiv} \hspace{0.1cm} \forall i\in I \Rightarrow \bigwedge_{i\in I}\sigma_i \in \mathcal{L}_{N}^{\equiv}$.
\item \textbf{Its negative closure}  $\mathcal{L}_{N}^{\neg}$ by: $\sigma \in \mathcal{L}^{\prime}_{N} \Rightarrow \neg \sigma \in \mathcal{L}_{N}^{\neg}$; $\sigma_i \in \mathcal{L}_{N}^{\neg} \hspace{0.1cm} \forall i\in I \Rightarrow \bigwedge_{i\in I}\sigma_i \in \mathcal{L}_{N}^{\neg}$.
\item \textbf{Its positive closure} $\mathcal{L}_{N}^{\surd}$ by: $\sigma \in \mathcal{L}^{\prime}_{N} \Rightarrow \sigma \in \mathcal{L}_{N}^{\surd}$; $\sigma_i \in \mathcal{L}_{N}^{\surd} \hspace{0.1cm} \forall i\in I \Rightarrow \bigwedge_{i\in I}\sigma_i \in \mathcal{L}_{N}^{\surd}$.
\end{enumerate}
\end{definition}

Whenever we have a bag of ``good'' properties (such as $\mathcal{L}^{\prime}_{N}$ above), if we want to assert by means of a single formula which is the subset of properties that a certain element satisfies, it is not sufficient to assert that it satisfies each one of them: we also need to assert that it does not satisfy all the rest. This is why we need formulas in the symmetric closure. Instead, if we can only manage formulas from the negative (resp. positive) closure, we can only assert that the element has at most (resp. at least) the enumerated properties. Next we present the unified logics for all the linear semantics in the \emph{enlarged spectrum}.

\begin{definition}\label{linear_logica}
Inspired by the orders $\leq_N^{l}$, $\leq_N^{l\supseteq}$, $\leq_N^{lf}$ and $\leq_N^{lf\supseteq}$, we define the set of formulas $\mathcal{L}^{\prime}_{\leq_N^{l}}$, $\mathcal{L}^{\prime}_{\leq_N^{l\supseteq}}$, $\mathcal{L}^{\prime}_{\leq_N^{lf}}$ and $\mathcal{L}^{\prime}_{\leq_N^{lf\supseteq}}$, respectively,  by means of the rules:
\begin{enumerate}
\item \lin{\conjyform{{\leq_N^{l}}}{N}}{{\leq_N^{l}}}
\item \linf{N}{{\leq_N^{l\supseteq}}}
\item \linc{N}{{\leq_N^{lf}}}
\item \linfc{N}{{\leq_N^{lf\supseteq}}}
\end{enumerate}
\end{definition}

Note that for the coarsest semantics (e.g. those corresponding to plain refusals and plain readiness when $N=I$) we only observe $N$ at the end of the formula. Instead, the other two logics introduce additional conjunctions that allow to observe $N$ along the computations. Moreover, we have used the negative (resp. symmetric) closure in the ``failures based''  (resp. ``readies based'') semantics. 

We can use the positive closure to define two new semantics that were not studied in \cite{fgp09_equational,fgp09_observational} nor elsewhere, as far as we know. They are defined by observing partial offers along a computation, or just at its end. We say that \emph{X} is a partial offer of \emph{p} if $X \subseteq I(p)$. It is clear the duality w.r.t. the failures semantics, where \emph{F} is a failure of \emph{p} if $I(p) \subseteq \overline{F}$. We can introduce these two new semantics at each layer of the spectrum, by defining the corresponding partial offers for each $N \in \{U, C, I, T, S\}$.

\begin{definition}
\begin{enumerate}
\item The semantics of \textbf{partial offer traces} for the constraint N is that defined by the logic $\mathcal{L}^{\prime}_{\leq_N^{l\subseteq}}$ with \linfmo{N}{\leq_N^{l\subseteq}}
\item The semantics of \textbf{partial offers} for the constraint N is that defined by the logic $\mathcal{L}^{\prime}_{\leq_N^{lf\subseteq}}$ with \linfcmo{N}{{\leq_N^{lf\subseteq}}}
\end{enumerate}
\end{definition}

Duality between failures and partial offers causes the picture of the complete layer of linear semantics for each \textit{N} to become two diamonds that share the side corresponding to the readies-based semantics.

\begin{Proposition}
\begin{enumerate}
\item $\mathcal{L}^{\prime}_{F}$ and $\mathcal{L}^{\prime}_{\leq_I^{lf\subseteq}}$ are not comparable: $p\leq_I^{lf\supseteq}\!q$ $\nRightarrow$ $p\leq_I^{lf\subseteq}\!q\ $ and $\ p\leq_I^{lf\subseteq} \!q$ $\nRightarrow$ $p\leq_I^{lf\supseteq}\! q$.
\item $\mathcal{L}^{\prime}_{FT}$ and $\mathcal{L}^{\prime}_{\leq_I^{l\subseteq}}$ are incomparable: $p\leq_I^{l\supseteq}q$ $\nRightarrow$ $p\leq_I^{l\subseteq}q$ and $p\leq_I^{l\subseteq}q$ $\nRightarrow$ $p\leq_I^{l\supseteq}q$.
\end{enumerate}
\end{Proposition}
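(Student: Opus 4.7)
The plan is to establish each non-implication by a concrete counterexample, and, pleasantly, a single pair of processes together with its swap suffices to handle all four statements. The guiding intuition is that the $\supseteq$ and $\subseteq$ variants of the linear orderings differ precisely when $p$'s final (or intermediate) offer is incomparable with every offer that $q$ exhibits after the same trace, and this happens exactly at the classical gap between $a(b+c)$ and $ab+ac$.

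Let me take $p = a(b+c)$ and $q = ab + ac$ (with $\mathit{Act}\supseteq\{a,b,c\}$). First, I would list the linear general observations $LGO_I(p)$ and $LGO_I(q)$ explicitly; the only point requiring attention is that the sequences carry the offer at every intermediate state, not just at the end. In particular $\{a\}\,a\,\{b,c\}$ is an lgo of $p$ after action $a$, while the corresponding lgo's of $q$ after $a$ are $\{a\}\,a\,\{b\}$ and $\{a\}\,a\,\{c\}$, because of the branching in $q$.

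Next I verify $p\leq_I^{lf\supseteq} q$ by walking through each lgo of $p$ and producing a match in $q$ that has the same action sequence and a final offer contained in the final offer of $p$; the only nontrivial match is $\{a\}a\{b,c\}$ against $\{a\}a\{b\}$ (or $\{a\}a\{c\}$), which works because $\{b,c\}\supseteq\{b\}$. The very same matches are element-wise $\supseteq$-comparisons, so they simultaneously witness $p\leq_I^{l\supseteq} q$. Then I show $p\not\leq_I^{lf\subseteq} q$ (and \emph{a fortiori} $p\not\leq_I^{l\subseteq} q$): the lgo $\{a\}a\{b,c\}$ of $p$ admits no match in $q$, since neither $\{b,c\}\subseteq\{b\}$ nor $\{b,c\}\subseteq\{c\}$ holds.

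Finally, I swap the roles to get the reverse non-implications. With $p' = ab+ac$ and $q' = a(b+c)$, each lgo of $p'$ admits a match in $q'$ via $\{b\}\subseteq\{b,c\}$ (resp.\ $\{c\}\subseteq\{b,c\}$), giving $p'\leq_I^{lf\subseteq} q'$ and $p'\leq_I^{l\subseteq} q'$; but $\{a\}a\{b\}$ of $p'$ cannot be matched by $\supseteq$ in $q'$, since $\{b\}\not\supseteq\{b,c\}$, yielding $p'\not\leq_I^{lf\supseteq} q'$ and $p'\not\leq_I^{l\supseteq} q'$. This settles (1) and (2) at once. The main obstacle is purely bookkeeping: being careful to enumerate \emph{all} lgo's of both processes so that the positive directions are genuinely verified and no potential match is missed in the negative directions.
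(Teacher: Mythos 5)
Your proof is correct and rests on exactly the same counterexample idea as the paper's: the pair $ab+ac$ versus $a(b+c)$, whose offers $\{b\},\{c\}$ versus $\{b,c\}$ can be matched by $\supseteq$ in one direction and by $\subseteq$ in the other, but never both. The paper merely packages this slightly more strongly by setting $r=p+q$ and observing that $r$ is $\leq_I^{l\supseteq}$-equivalent to $ab+ac$ yet fails $\leq_I^{lf\subseteq}$ against it (and dually for $a(b+c)$), but the underlying witnesses are the ones you exhibit.
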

\begin{proof}
In fact we have a stronger result combining the two statements: if we consider $p=ab+ac$, $q=a(b+c)$ and $r=p+q$, we have that $\ p\sim_{l\supseteq}r\ $ but $\ r\nleq_I^{lf\subseteq}p\ $ and $\ q\sim_{l\subseteq}r\ $ but $\ r\nleq_I^{lf\supseteq}q$. 
\end{proof}

We could obtain similar counterexamples for $N \in \{T,S\}$. Instead, for $N \in \{U,C\}$, which produce the trace semantics and the complete traces semantics, respectively, it is easy  to prove that the six logics of the layer are indeed equivalent.

\begin{Proposition}
We have (1) $\mathcal{L}^{\prime}_{\leq^{lf}_U}$ = $\mathcal{L}^{\prime}_{\leq^{l}_U}$ = $\mathcal{L}^{\prime}_{\leq^{l\supseteq}_U}$ = $\mathcal{L}^{\prime}_{\leq^{l\subseteq}_U}$ = $\mathcal{L}^{\prime}_{\leq^{lf\supseteq}_U}$ = $\mathcal{L}^{\prime}_{\leq^{lf\subseteq}_U}$ = $\mathcal{L}_{T}$ and (2) $\mathcal{L}^{\prime}_{\leq^{lf\supseteq}_C}$ = $\mathcal{L}^{\prime}_{\leq^{lf\subseteq}_C}$ = $\mathcal{L}^{\prime}_{\leq^{l\supseteq}_C}$ = $\mathcal{L}^{\prime}_{\leq^{l\subseteq}_C}$ = $\mathcal{L}^{\prime}_{\leq^{lf}_C}$ = $\mathcal{L}^{\prime}_{\leq^{l}_C}$ = $\mathcal{L}_{CT}$.
\end{Proposition}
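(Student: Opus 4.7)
My plan is to prove both equalities by showing that each of the six logics on the left-hand side is semantically equivalent (induces the same preorder as) the trace logic $\mathcal{L}_T$ in case (1), resp. the completed trace logic $\mathcal{L}_{CT}$ in case (2). The equalities should be read modulo $\sim$ in the sense of Def. 3 of the excerpt, so I will rely on Prop.\ \ref{disyuncion_teo} to move freely between a formula and a disjunction of simpler formulas inducing the same semantics. The strategy in every case is to first compute what the closures $\mathcal{L}^{\equiv}_N$, $\mathcal{L}^{\neg}_N$, $\mathcal{L}^{\surd}_N$ look like when $N \in \{U,C\}$, and then push this information through the inductive definitions of $\mathcal{L}^{\prime}_{\leq^{l}_N}$, $\mathcal{L}^{\prime}_{\leq^{l\supseteq}_N}$, $\mathcal{L}^{\prime}_{\leq^{l\subseteq}_N}$, $\mathcal{L}^{\prime}_{\leq^{lf}_N}$, $\mathcal{L}^{\prime}_{\leq^{lf\supseteq}_N}$ and $\mathcal{L}^{\prime}_{\leq^{lf\subseteq}_N}$.

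For (1), the key observation is that $\mathcal{L}^{\prime}_U=\{\top\}$, so each closure contains only formulas built by conjunction from $\top$ and possibly $\neg\top$. Every such formula is semantically equivalent either to $\top$ or to $\neg\top \equiv \bot$. Hence the ``observation'' inserted by the second rule of each $\mathcal{L}^{\prime}_{\leq^X_U}$ is either redundant or trivialises the whole formula to $\bot$. Thus every formula in any of the six logics is equivalent to $\bot$ or to a pure trace formula $a_1 a_2 \cdots a_n\top$, which is precisely $\mathcal{L}_T$ as in Table \ref{logic_table}. The reverse inclusion $\mathcal{L}_T \subseteq \mathcal{L}^{\prime}_{\leq^X_U}$ is immediate because each of these logics contains $\top$ and is closed under prefixing.

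For (2) the closures of $\mathcal{L}^{\prime}_C=\{\top,\neg 0\}$ are slightly richer: up to logical equivalence they contain exactly the four formulas $\top$, $\neg 0$, $0$ and $\bot$ (because $\neg\neg 0 \equiv 0$ and the other binary conjunctions reduce to one of these four). I will then normalise a generic formula of $\mathcal{L}^{\prime}_{\leq^X_C}$ into the shape $\sigma_0 \wedge a_1(\sigma_1 \wedge a_2(\cdots a_n(\sigma_n \wedge \top)\cdots))$ with each $\sigma_i$ taken from $\{\top, \neg 0, 0, \bot\}$. A case analysis on where $0$, $\neg 0$ and $\bot$ occur shows that every such formula is semantically equivalent to one of: $\bot$; a trace formula $a_1\cdots a_n\top$; a completed trace formula $a_1\cdots a_n 0$; or the disjunction $\bigvee_{b\in Act} a_1\cdots a_k b\top$ corresponding to an intermediate $\neg 0$. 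Since the first three lie in $\mathcal{L}_{CT}$ and the fourth lies in $\mathcal{L}^{\vee}_{CT}$, Prop.\ \ref{disyuncion_teo} gives the required semantic equivalence with $\mathcal{L}_{CT}$; the reverse inclusion is again immediate because $\top$, $0$ (as $\neg\neg 0$) and prefixing are available in every logic.

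The main obstacle will be the bookkeeping in case (2): checking carefully that an intermediate $\neg 0$ preceded by a further prefix is absorbed (redundant with the following $a_{k+1}\top$), whereas a final $\neg 0$ produces genuine information which must be rewritten as the above disjunction. I also need to verify uniformly across all four non-trivial closure variants that no new formula outside the pattern above can be synthesised, in particular checking the negative closure cases $\mathcal{L}^{\prime}_{\leq^{l\supseteq}_C}$ and $\mathcal{L}^{\prime}_{\leq^{lf\supseteq}_C}$ where only $\neg\top$ and $\neg\neg 0$ are available at intermediate positions, and the symmetric closure cases where $0$ enters directly. Once that routine enumeration is done the claim follows, and both equalities collapse to the already-known characterisations of $\mathcal{L}_T$ and $\mathcal{L}_{CT}$ in Table \ref{logic_table}.
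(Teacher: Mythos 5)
Your overall strategy (compute the closures for $N\in\{U,C\}$ up to logical equivalence, normalise a generic formula into the shape $\sigma_0\wedge a_1(\sigma_1\wedge\cdots)$, and then do a case analysis on where $0$, $\neg 0$, $\bot$ occur, invoking Proposition~\ref{disyuncion_teo} to absorb the resulting disjunctions) is the right one, and it works for all of case (1) and for the four logics $\mathcal{L}^{\prime}_{\leq^{l}_C}$, $\mathcal{L}^{\prime}_{\leq^{l\supseteq}_C}$, $\mathcal{L}^{\prime}_{\leq^{lf}_C}$, $\mathcal{L}^{\prime}_{\leq^{lf\supseteq}_C}$, whose closures do contain $\neg\neg 0\equiv 0$.

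There is, however, a genuine gap in case (2) for the two positive-closure logics. You justify the reverse inclusion by saying that ``$\top$, $0$ (as $\neg\neg 0$) and prefixing are available in every logic'', but $\mathcal{L}^{\surd}_C$ is built only from $\top$ and $\neg 0$ by conjunction --- negation is never applied in the positive closure, so $\neg\neg 0$ is simply not there, and $0$ is not expressible in $\mathcal{L}^{\prime}_{\leq^{l\subseteq}_C}$ or $\mathcal{L}^{\prime}_{\leq^{lf\subseteq}_C}$ even up to equivalence. Your own normal-form enumeration already betrays this: for these two logics the only reachable normal forms are $\bot$, trace formulas $a_1\cdots a_n\top$, and the disjunctions $\bigvee_{b}a_1\cdots a_n b\top$ coming from a final $\neg 0$; no completed-trace formula $a_1\cdots a_n 0$ ever appears, so what you have actually shown is $\mathcal{L}^{\prime}_{\leq^{lf\subseteq}_C}\sim\mathcal{L}^{\prime}_{\leq^{l\subseteq}_C}\sim\mathcal{L}_T$, not $\sim\mathcal{L}_{CT}$. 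Concretely, $ab$ and $ab+a$ have the same traces and hence satisfy exactly the same formulas of $\mathcal{L}^{\prime}_{\leq^{lf\subseteq}_C}$, yet $ab+a\models a0$ while $ab\nvDash a0$, so $\mathcal{L}_{CT}$ separates them. So the step asserting the reverse inclusion for the $\subseteq$ variants fails and cannot be repaired as written; you would either have to restrict the claim to the other four logics (where your argument is complete) or argue for a different reading of the positive closure at level $C$ under which $0$ becomes expressible.
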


An interesting result illustrating the genericity of our characterizations concerns one of the finest semantics in the classic spectrum: Possible Future (PF). We find PF in Figure \ref{spectrum_Gla} below 2S, probably because the more accurate simulation semantics TS was not (yet) included in the spectrum. This is corrected in the \emph{enlarged spectrum} in Figure \ref{spectrum_fgp}. Considering $N=T$, we have indeed the following result.

\begin{Proposition}
We have $\mathcal{L}^{\prime}_{\leq^{lf}_T}$ = $\mathcal{L}_{PF}$.
\end{Proposition}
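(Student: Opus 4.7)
The plan is to establish the equality $\mathcal{L}^{\prime}_{\leq^{lf}_T} = \mathcal{L}_{PF}$ by mutual inclusion, each direction proceeding by structural induction on the derivation of a formula. The pivotal auxiliary observation is that every element of the symmetric closure $\mathcal{L}^{\equiv}_T$ can be flattened into a single-level conjunction $\bigwedge_{i\in I}\varphi_i \wedge \bigwedge_{j\in J}\neg\varphi_j$ with $\varphi_i,\varphi_j \in \mathcal{L}^{\prime}_T$, which by the preceding proposition coincides with $\mathcal{L}_T$; this flat normal form is exactly the shape produced by van Glabbeek's PF-conjunction clause.

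For the inclusion $\mathcal{L}_{PF} \subseteq \mathcal{L}^{\prime}_{\leq^{lf}_T}$ I would induct on the derivation that $\varphi \in \mathcal{L}_{PF}$. The base case $\top$ and the prefix step $\varphi \mapsto a\varphi$ are immediate, because the $\linc$ schema that generates $\mathcal{L}^{\prime}_{\leq^{lf}_T}$ contains the corresponding rules. For a PF-conjunction $\bigwedge_i \varphi_i \wedge \bigwedge_j \neg\varphi_j$ with $\varphi_i,\varphi_j \in \mathcal{L}_T = \mathcal{L}^{\prime}_T$, each positive conjunct and each negated conjunct lies in $\mathcal{L}^{\equiv}_T$ by its first generation clause, and closing under conjunction places the whole formula in $\mathcal{L}^{\equiv}_T$; the clause $\sigma \in \mathcal{L}^{\equiv}_T \Rightarrow \sigma \in \mathcal{L}^{\prime}_{\leq^{lf}_T}$ then closes the case.

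For the reverse inclusion $\mathcal{L}^{\prime}_{\leq^{lf}_T} \subseteq \mathcal{L}_{PF}$ I would again induct on the derivation. The $\top$ case and the prefix step translate directly via the PF rules for $\top$ and for $a\varphi$. The delicate case is the injection of $\sigma \in \mathcal{L}^{\equiv}_T$ into $\mathcal{L}^{\prime}_{\leq^{lf}_T}$: here a secondary induction on the derivation of $\sigma$ inside the symmetric closure shows that $\sigma$ can be rewritten, by reindexing of nested conjunctions, into a single-level formula $\bigwedge_i \varphi_i \wedge \bigwedge_j \neg\varphi_j$ with $\varphi_i,\varphi_j \in \mathcal{L}^{\prime}_T = \mathcal{L}_T$. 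Once in this form, $\sigma$ falls under the PF-conjunction clause and thus belongs to $\mathcal{L}_{PF}$.

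The main obstacle I expect is precisely this flattening step. The inductive definition of $\mathcal{L}^{\equiv}_T$ allows arbitrary nesting of $\bigwedge$, but crucially negation only ever appears applied to a base element of $\mathcal{L}^{\prime}_T$, never to a compound $\mathcal{L}^{\equiv}_T$-formula. This structural invariant must be made explicit so that one can commute and merge the nested conjunctions into a single indexed conjunction without encountering a negated compound (which would otherwise force an appeal to De Morgan and to Prop.~\ref{disyuncion_teo} for disjunction elimination). With the invariant in hand, the flattening is a routine reindexing, and the equivalence $\mathcal{L}^{\prime}_{\leq^{lf}_T} = \mathcal{L}_{PF}$ follows by combining the two inductions with the identification $\mathcal{L}^{\prime}_T = \mathcal{L}_T$ at the trace layer.
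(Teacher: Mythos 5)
Your argument is correct and is essentially the expected one: the paper omits this proof, but the natural route is exactly the syntactic comparison you give, using that the constraint logic $\mathcal{L}^{\prime}_{T}$ coincides with van Glabbeek's $\mathcal{L}_{T}$, that in $\mathcal{L}^{\prime}_{\leq^{lf}_{T}}$ the symmetric-closure formulas occur only in innermost position (so both logics have the shape $a_{1}\cdots a_{n}\sigma$), and that every $\sigma\in\mathcal{L}^{\equiv}_{T}$ flattens to $\bigwedge_{i}\varphi_{i}\wedge\bigwedge_{j}\neg\varphi_{j}$ with $\varphi_{i},\varphi_{j}\in\mathcal{L}_{T}$ because negation is only ever applied to base trace formulas. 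Your observation that the stated ``$=$'' can only hold up to this reindexing of nested conjunctions (and the unit $\top$) is the right caveat, and the structural invariant you isolate is precisely what makes the flattening go through without De Morgan.
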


\subsection{Logical characterization of the deterministic branching semantics} \label{logica_ramificada}
Next we consider the deterministic branching semantics. In the classic spectrum the only such semantics is \emph{Possible Worlds} (PW), but there is one such semantics for each level of the \emph{enlarged spectrum}.

\begin{definition}\label{logica_det}
For each $N \in \{U,C,I,T,S\}$, we define the formulas of $\mathcal{L}^{\prime}_{D_N}$ by: \dbran{\conjyform{D_N}{N}}{D_N}
\end{definition}

For $N=I$ we obtain the unified logical characterization of the PW semantics.

\begin{Proposition}
We have $\mathcal{L}^{\prime}_{D_I}\supseteq \mathcal{L}_{PW}$.
\end{Proposition}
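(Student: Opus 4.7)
The plan is to proceed by structural induction on $\varphi \in \mathcal{L}_{PW}$, using the second clause in the definition of $\mathcal{L}^{\prime}_{D_I}$ as the gadget for injecting the ``offer-fixing'' subformulas and the third clause for handling the branching prefixes. The preparatory observation I would establish first is that since $\{a\top \mid a \in \emph{Act}\} \subseteq \mathcal{L}^{\prime}_I$, both $a\top$ and $\neg a\top$ lie in $\mathcal{L}^{\equiv}_I$, and $\mathcal{L}^{\equiv}_I$ is closed under arbitrary conjunctions. Consequently, each of the shorthand formulas
\[ \textstyle X \;=\; \bigwedge_{a\in X} a\top \,\wedge\, \bigwedge_{a\notin X} \neg a\top, \qquad \widetilde{X} \;=\; \bigwedge_{a\in X} \neg a\top, \qquad \mathbf{0}\;=\;\widetilde{\emph{Act}} \]
belongs to $\mathcal{L}^{\equiv}_I$. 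This single remark will cover every atomic ``offer'' formula appearing in $\mathcal{L}_{PW}$.

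For the base cases, $\top \in \mathcal{L}^{\prime}_{D_I}$ by the first clause; and for each of $\mathbf{0}$, $X$, and $\widetilde{X}$ we apply the second clause in the form $\sigma \wedge \top$ with $\sigma \in \mathcal{L}^{\equiv}_I$, which places them in $\mathcal{L}^{\prime}_{D_I}$. For the inductive cases: given $a\varphi$ with $\varphi \in \mathcal{L}_{PW}$, the inductive hypothesis gives $\varphi \in \mathcal{L}^{\prime}_{D_I}$, and the third clause with $X = \{a\}$ yields $a\varphi \in \mathcal{L}^{\prime}_{D_I}$. For a full deterministic branching $\bigwedge_{a \in X} a\varphi_a$, I apply the induction hypothesis to each $\varphi_a$ and invoke the third clause directly. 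Finally, formulas of the shape $X\varphi$ or $\widetilde{X}\varphi$ (both redundant but listed in $\mathcal{L}_{PW}$) are handled by combining the second clause, using $\sigma \in \{X,\widetilde{X}\} \subseteq \mathcal{L}^{\equiv}_I$, with the inductive hypothesis applied to $\varphi$.

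The only delicate point—and what I expect to be the main obstacle—is the treatment of syntactic sugar: the ``bare'' formula $X$ in $\mathcal{L}_{PW}$ has to be identified with $X \wedge \top$ inside $\mathcal{L}^{\prime}_{D_I}$, since the latter only admits conjunctions of the form $\sigma \wedge \varphi$. Both expressions unfold to the same conjunction of literals over $\{a\top, \neg a\top\}$ up to the trivial absorption of $\top$, so the inclusion $\supseteq$ must be understood modulo this notational identification—exactly as van Glabbeek implicitly does when he writes $X$ rather than $X \wedge \top$. Once this convention is fixed, the induction carries through routinely, and the inclusion $\mathcal{L}_{PW} \subseteq \mathcal{L}^{\prime}_{D_I}$ follows; the strictness hinted at by the paper's remark about the ``minor mistake'' in van Glabbeek's original characterization would then show up as the extra formulas that our uniform definition supplies but that were missing from $\mathcal{L}_{PW}$.
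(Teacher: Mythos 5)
Your proof is correct and follows essentially the same clause-by-clause translation that the paper uses for the analogous inclusions $\mathcal{L}^{\prime}_{Z}\supseteq\mathcal{L}_{Z}$ (the paper omits the proof for $PW$ itself, but its argument for $RT$ is exactly this pattern): observe that the offer formulas $X$, $\widetilde{X}$, $\mathbf{0}$ land in $\mathcal{L}^{\equiv}_I$, then push each generating rule of $\mathcal{L}_{PW}$ through the clauses of $\mathcal{L}^{\prime}_{D_I}$, reading bare $X$ as $X\wedge\top$ and $a\varphi$ as a singleton deterministic conjunction. Your explicit flagging of the $X$ versus $X\wedge\top$ desugaring issue is consistent with the conventions the paper itself adopts (e.g.\ its ``desugared'' inclusions for $FT$ and $F$), so no gap remains.
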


By the way, $\mathcal{L}^{\prime}_{D_I}$ and $\mathcal{L}_{PW}$ are not equivalent, but this is caused by the fact that the original logical characterization $\mathcal{L}_{PW}$ was wrong. It can be checked, for instance, that taking $p=abc+a(bc+d)+ab$ and $q=a(bc+d)+ab$ we have $p \nsim_{PW} q$, but $p\sim_{\mathcal{L}_{PW}}q$, since $\mathcal{L}_{PW}$ cannot ``observe'' the intermediate offer that makes the possible world $abc$ different from those of \textit{q}. Instead, the formula $\varphi\equiv a(\neg d \wedge bc) \in \mathcal{L}_{D_I}$ is enough to distinguish \textit{p} and \textit{q}, since we have $p\models \varphi$ and $q \nvDash \varphi$.

\begin{table}[h]
\begin{center}
\scalebox{0.85}{\footnotesize
\begin{tabular}{|c|c|c|c|c|c||c|}
\hline
\backslashbox{Formulas}{Constraints ($\mathcal{N}$)}& U & C & I & T & S & B\\
\hline
$\top  \in \mathcal{L}^{\prime}_{\mathcal{N}}$ & $\bullet$ & $\bullet$ & $\bullet$ & $\bullet$ & $\nu$ & $\nu$\\
\hline
$\textbf{$\neg \top$ = $\perp$}\in \mathcal{L}^{\prime}_{\mathcal{N}}$& $\nu$ &$\nu$&$\nu$ & $\nu$ & $\nu$ & $\nu$\\
\hline
$\textbf{$\neg0$}\in \mathcal{L}^{\prime}_{\mathcal{N}}$& &$\bullet$ & $\bullet$ & $\nu$ & $\nu$ & $\nu$\\
\hline
$a \in Act \Rightarrow \hspace{0.075cm} a\top \in \mathcal{L}^{\prime}_{\mathcal{N}}$ & & &$\bullet$ &$\nu$ &$\nu$ &$\nu$\\
\hline
$\varphi \in \mathcal{L}^{\prime}_{\mathcal{N}}, \hspace{0.075cm} a \in Act \Rightarrow$ & & & &\multirow{2}{*}{$\bullet$} &\multirow{2}{*}{$\bullet$} &\multirow{2}{*}{$\bullet$} \\
$a \varphi \in \mathcal{L}^{\prime}_{\mathcal{N}}$& & & & & &\\
\hline
$\varphi_i \in \mathcal{L}^{\prime}_{\mathcal{N}} \hspace{0.075cm} \forall i \in I \Rightarrow$ & & & & &\multirow{2}{*}{$\bullet$} &\multirow{2}{*}{$\bullet$}\\
$\bigwedge_{i \in I} \varphi_i \in \mathcal{L}^{\prime}_{\mathcal{N}}$ & & & & & &\\
\hline
$\varphi \in \mathcal{L}^{\prime}_{\mathcal{N}} \Rightarrow$ & & & & & &\multirow{2}{*}{$\bullet$}\\
$\neg \varphi \in \mathcal{L}^{\prime}_{\mathcal{N}}$ & & & & & &\\
\hline
\end{tabular}}
\vspace{1ex}
\caption{Logical characterizations of the semantics used as constraints in the \emph{N}-constrained semantics} \label{our_constraint_table}
\vspace{-0.2cm}
\end{center}
\end{table}

\begin{table}[h]
\begin{center}
\scalebox{0.85}{\footnotesize
\begin{tabular}{|c|c|c|c|c|c|c||c}
\hline
\multirow{4}{*}{\backslashbox{Formulas}{Semantics ($\mathcal{Y_N}$)}}& \multirow{2}{*}{$\mathbf{\leq_\textit{N}^{lf\supseteq}}$} & \multirow{2}{*}{$\mathbf{\leq_\textit{N}^{lf}}$} & \multirow{2}{*}{$\mathbf{\leq_\textit{N}^{l\supseteq}}$} & \multirow{2}{*}{$\mathbf{\leq_\textit{N}^{l}}$} &\multirow{2}{*}{$D_N$} & \multirow{2}{*}{$NS$} & \multicolumn{1}{c|}{\multirow{2}{*}{$N \in \{U,C,I,T,S\}$}}\\ 
& & & & & & & \multicolumn{1}{c|}{}\\
\cline{2-8}
& \multirow{2}{*}{F} & \multirow{2}{*}{R} & \multirow{2}{*}{FT} & \multirow{2}{*}{RT} & \multirow{2}{*}{PW} & \multirow{2}{*}{RS} & \multicolumn{1}{c|}{\multirow{2}{*}{when $N=I$}}\\
& & & & & & & \multicolumn{1}{c|}{}\\
\hline
$\top  \in \mathcal{L}^{\prime}_{{\mathcal{Y_N}}}$ & $\bullet$  & $\bullet$ & $\bullet$ & $\bullet$ & $\bullet$ & $\nu$ &\\
\cline{1-7}
$\varphi \in \mathcal{L}^{\prime}_{{\mathcal{Y_N}}}, \hspace{0.075cm} a \in Act \Rightarrow$ & \multirow{2}{*}{$\bullet$}  & \multirow{2}{*}{$\bullet$} & \multirow{2}{*}{$\bullet$} & \multirow{2}{*}{$\bullet$} & \multirow{2}{*}{$\nu$} & \multirow{2}{*}{$\bullet$} &\\
$a \varphi \in \mathcal{L}^{\prime}_{{\mathcal{Y_N}}}$ &  &  &  &  &  &  &\\
\cline{1-7}
$\varphi \in \mathcal{L}_{N}^{\neg}\Rightarrow$ & \multirow{2}{*}{$\bullet$}  & \multirow{2}{*}{$\nu$} & \multirow{2}{*}{$\nu$}  & \multirow{2}{*}{$\nu$}  & \multirow{2}{*}{$\nu$}  & \multirow{2}{*}{$\nu$}  &\\
$\varphi \in \mathcal{L}^{\prime}_{\mathcal{Y_N}}$ &  &  &  &  &  &  &\\
\cline{1-7}
$\varphi \in \mathcal{L}_{N}^{\equiv}\Rightarrow$ &  & \multirow{2}{*}{$\bullet$}  &  & \multirow{2}{*}{$\nu$}  & \multirow{2}{*}{$\nu$}  & \multirow{2}{*}{$\nu$}  &\\
$\varphi \in \mathcal{L}^{\prime}_{\mathcal{Y_N}}$ &  &  &  &  &  &  &\\
\cline{1-7}
$\varphi \in \mathcal{L}^{\prime}_{{\mathcal{Y_N}}}, \hspace{0.075cm} \sigma \in \mathcal{L}_{N}^{\neg}\Rightarrow$ &  &  & \multirow{2}{*}{$\bullet$}  & \multirow{2}{*}{$\nu$}  & \multirow{2}{*}{$\nu$}  & \multirow{2}{*}{$\nu$}  &\\
$\sigma \wedge \varphi \in \mathcal{L}^{\prime}_{\mathcal{Y_N}}$ &  &  &  &  &  &  &\\
\cline{1-7}
$\varphi \in \mathcal{L}^{\prime}_{{\mathcal{Y_N}}}, \hspace{0.075cm} \sigma \in \mathcal{L}_{N}^{\equiv}\Rightarrow$ &  &  &  & \multirow{2}{*}{$\bullet$}  & \multirow{2}{*}{$\bullet$}  & \multirow{2}{*}{$\nu$}  &\\
$\sigma \wedge \varphi \in \mathcal{L}^{\prime}_{\mathcal{Y_N}}$ &  &  &  &  &  &  &\\
\cline{1-7}
$X\subseteq Act, \hspace{0.075cm} \varphi_a \in \mathcal{L}^{\prime}_{\mathcal{Y_N}} \hspace{0.075cm} \forall a \in X \Rightarrow$ &  &  &  &  & \multirow{2}{*}{$\bullet$} &\multirow{2}{*}{$\nu$} &\\
$\bigwedge_{a \in X} a\varphi_a \in \mathcal{L}^{\prime}_{\mathcal{Y_N}}$ &  &  &  &  &  &  &\\
\cline{1-7}
$\varphi_i \in \mathcal{L}^{\prime}_{{\mathcal{Y_N}}} \hspace{0.075cm} \forall i \in I \Rightarrow$ &  &  &  &  &  & \multirow{2}{*}{$\bullet$} &\\
$\bigwedge_{i \in I} \varphi_i \in \mathcal{L}^{\prime}_{{\mathcal{Y_N}}}$ &  &  &  &  &  &  &\\
\cline{1-7}
$\varphi \in \mathcal{L}_{N} \Rightarrow$ &  &  &  &  &  & \multirow{2}{*}{$\bullet$}   &\\
$\varphi \in \mathcal{L}^{\prime}_{\mathcal{Y_N}}$ &  &  &  &  &  &  &\\
\cline{1-7}
$\varphi \in \mathcal{L}_{N} \Rightarrow$ &  &  &  &  &  & \multirow{2}{*}{$\bullet$}  &\\
$\neg \varphi \in \mathcal{L}^{\prime}_{\mathcal{Y_N}}$ &  &  &  &  &  &  &\\
\cline{1-7}
\end{tabular}}
\vspace{1ex}
\caption{Our new logical characterizations for the semantics at each level of the ltbt-spectrum} \label{our_logic_table}
\vspace{-0.2cm}
\end{center}
\end{table}

In Tables \ref{our_constraint_table} and \ref{our_logic_table}, we present all our results in a three-dimensional way: Table \ref{our_logic_table} shows the rules defining the logics characterizing each of the semantics at each layer of the \emph{enlarged spectrum} (we provide an additional column with the particularization for $N=I$), while Table \ref{our_constraint_table} contains the logics that characterize the constraint governing each of these ``layers''.  As commented above, there are two semantics that appear in both tables, although disguised with different names: $T = \ \leq_U^{l}$ (in fact, it is also equal to the other three linear \emph{U-semantics}) and $S=US$.
\section{Relating the unified logics and the unified observational model}\label{logic_observational_framework}
In this Section we will relate the unified logical characterizations and the unified observational semantics developed in \cite{fgp09_observational}. As we indicated in Section \ref{preliminaries}, we have to restrict ourselves to finite image processes to obtain the result. As a byproduct, we get for this kind of processes that the finite parts of each of the corresponding languages, that are obtained by intersection with $\mathcal{L}^{f}_{HM}$, give us a pure finite logical characterization of the semantics.
We start by considering the following concept of normal formula.

\begin{definition}[\textbf{Normal formula $\mathcal{N(L)}$}]\label{def:normal}
\begin{enumerate}
\item Given a set of formulas $\mathcal{L}$, whose outermost operator is not the conjunction, we define the set of induced normal formulas, $\mathcal{N(L)}$, starting with $\top$ and adding those formulas that can be generated by applying the clause: If $\Gamma_1, \Gamma_2\subseteq \mathcal{L},\hspace{0.2cm} \{a_i \mid i\in I\} \subseteq Act$ and $\varphi_i \in \mathcal{N(L)}$, then $(\bigwedge_{\sigma\in \Gamma_1}\sigma \wedge \bigwedge_{\sigma \in \Gamma_2} \neg \sigma) \wedge \bigwedge_{i\in I}a_i \varphi_i \in \mathcal{N(L).}$
\item Now, for each $N \in \{U,C,I,T,S\}$ and each $\mathcal{Y_N} \in \{NS, \leq_N^{l}, \leq_N^{l\supseteq}, \leq_N^{lf}, \leq_N^{lf\supseteq}, \leq_N^{l\subseteq}, \leq_N^{lf\subseteq}, D_N\}$ in the spectrum, we define the set of normal formulas, $\mathcal{N_{Y_N}(L_N^{\prime \prime})} \subseteq \mathcal{L}^{\prime}_{\mathcal{Y_N}}$ simply as: $\mathcal{N_{Y_N}(L_N^{\prime \prime})}=\mathcal{N(L_N)}\bigcap \mathcal{L}^{\prime}_{\mathcal{Y_N}}$ where $\mathcal{L}_{N}^{\prime \prime}$ is the set of formulas in $\mathcal{L}^{\prime}_{N}$ whose outermost operator is not the conjunction.\end{enumerate}
\end{definition}

\begin{remark}
First note that the clause in Def. \ref{def:normal}.1 is a bit complicated: initially, we can apply it starting with $I=\emptyset$, and in this way we can obtain the first (non-trivial) normal formulas; then we can apply it recursively to obtain new, more complex, normal formulas; instead, the formulas in the two first subformulas come always from the original set $\mathcal{L}$.
Also note that we admit the use of infinite conjunction in those two first subformulas. As a consequence, these formulas could also have infinite depth (as infinite formulas in (the infinite generalizations of ) $\mathcal{L}_{HM}$). However, if we define the normal depth of formulas in $\mathcal{N(L_N)}$ as that obtained by counting the recursive nesting in the application of Def. \ref{def:normal}, then any normal formula has finite normal depth, and the set they form can be explored by structural induction.
\end{remark}

\begin{theorem}\label{equivalencia_logica_normal}
Each set of normal formulas $\mathcal{N_{Y_N}(L_N^{\prime \prime})}$ associated to each of the semantics in the spectrum  is equivalent to the full set of formulas $\mathcal{L}^{\prime}_{\mathcal{Y_N}}$.
\end{theorem}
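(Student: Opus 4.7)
Since by construction $\mathcal{N_{Y_N}(L_N^{\prime \prime})} \subseteq \mathcal{L}^{\prime}_{\mathcal{Y_N}}$, the inclusion $\sqsubseteq_{\mathcal{L}^{\prime}_{\mathcal{Y_N}}} \,\subseteq\, \sqsubseteq_{\mathcal{N_{Y_N}(L_N^{\prime \prime})}}$ is immediate from the definition of the induced preorder. For the converse direction, the plan is to show that every $\varphi \in \mathcal{L}^{\prime}_{\mathcal{Y_N}}$ is semantically equivalent (on every lts) to a disjunction of formulas in $\mathcal{N_{Y_N}(L_N^{\prime \prime})}$; combining this with Prop.~\ref{disyuncion_teo}, which adds disjunctions without changing the induced semantics, will give the remaining inclusion and hence the theorem.

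The normalization itself will be proved by structural induction on $\varphi$, following the formation rules of the specific logic $\mathcal{L}^{\prime}_{\mathcal{Y_N}}$ under consideration. The base case $\top$ is already normal, corresponding to $\Gamma_1=\Gamma_2=I=\emptyset$. For a prefixed formula $a\psi$, one first normalizes $\psi$ into $\psi' \in \mathcal{N}(\mathcal{L_N})$; then $a\psi'$ itself is normal with $\Gamma_1 = \Gamma_2 = \emptyset$ and the single summand $a\psi'$. For a conjunction $\sigma \wedge \psi$ where $\sigma$ lies in one of the closures $\mathcal{L}_N^{\equiv}$, $\mathcal{L}_N^{\neg}$, $\mathcal{L}_N^{\surd}$ required by the particular semantics, I would first unfold the top-level conjunction (and, in the $\equiv$ and $\neg$ cases, negation) inside $\sigma$ to expose its generators in $\mathcal{L}^{\prime}_N$, normalize $\psi$ inductively, and then absorb those generators into the outer $\Gamma_1/\Gamma_2$ of $\psi$. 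For the bisimulation-style rule $\bigwedge_{i \in I}\varphi_i$ of $\mathcal{L}^{\prime}_{NS}$, the normal forms of the $\varphi_i$ are merged by taking the union of their constraint pairs and concatenating their families of prefixed summands, which again yields a normal formula.

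The main obstacle will be the ``last-mile'' check that the formula produced by this inductive procedure is not merely a normal formula in the sense of Def.~\ref{def:normal}.1, but actually lies in $\mathcal{N_{Y_N}(L_N^{\prime \prime})} = \mathcal{N}(\mathcal{L_N}) \cap \mathcal{L}^{\prime}_{\mathcal{Y_N}}$, i.e., that the rewriting preserves membership in the specific $\mathcal{L}^{\prime}_{\mathcal{Y_N}}$. In the linear logics of Def.~\ref{linear_logica} the conjunction clause admits only a single continuation $\varphi$, so in the normal form the index set $I$ must be a singleton at each depth, and one has to verify that no step of the inductive procedure violates this invariant. Dually, in $\mathcal{L}^{\prime}_{D_N}$ the prefixed summands in $\bigwedge_{a \in X} a\varphi_a$ must range over pairwise distinct actions; if the merging of two normal subformulas produces two summands $a\psi_1$ and $a\psi_2$ for the same $a$, one has to replace the offending conjunction by a disjunction indexed over the possible choices of continuation, an operation that is exactly what Prop.~\ref{disyuncion_teo} is there to absorb. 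Once these case-by-case verifications are made uniformly in $N$ and $\mathcal{Y_N}$, the induction closes and the equivalence $\mathcal{N_{Y_N}(L_N^{\prime \prime})} \sim \mathcal{L}^{\prime}_{\mathcal{Y_N}}$ follows.
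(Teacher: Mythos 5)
Your proposal is correct and follows essentially the route the paper intends for this theorem (whose proof is omitted from this version for space): the inclusion $\mathcal{N_{Y_N}(L_N^{\prime \prime})}\subseteq \mathcal{L}^{\prime}_{\mathcal{Y_N}}$ gives one direction of the preorder equality for free, and the other is obtained by structurally inducting on the normal depth (cf.\ the Remark after Def.~\ref{def:normal}) to rewrite each formula as a disjunction of normal formulas --- flattening the closure conjunctions into the $\Gamma_1/\Gamma_2$ pairs --- and then discharging the disjunctions exactly as Prop.~\ref{disyuncion_teo} is used in the proof of Prop.~\ref{equiv_entre_logicas}. One minor over-caution: the duplicate-action clash you anticipate for $\mathcal{L}^{\prime}_{D_N}$ cannot actually occur, since Def.~\ref{logica_det} has no rule that conjoins two prefix families (only a constraint $\sigma$ with a formula, or a single deterministic branching $\bigwedge_{a\in X}a\varphi_a$); the merging of prefix families happens only in $\mathcal{L}^{\prime}_{NS}$, where arbitrary conjunction makes it harmless.
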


\begin{definition}
We define the set of complete normal formulas $\mathcal{CN(L)}$ (resp. the set of complete normal formulas associated to each semantics in the spectrum, $\mathcal{CN_{Y_N}(L_N^{\prime \prime})}$) as the set of normal formulas (resp. the set of normal formulas associated to each semantics in the spectrum) that satisfy the condition $\Gamma_2= \overline{\Gamma_1}$, whenever the rule in Def. \ref{def:normal} is applied in the generation of each formula.
\end{definition}

Next we state that infinite conjunction in Def. \ref{def:normal} can be approximated by finite conjunction.

\begin{theorem}\label{aproximacion}
If we restrict ourselves to finite image processes, any complete normal formula $\varphi \in \mathcal{CN(L)}$ can be approximated by  a set of finite normal formulas $\{ \varphi^{k} \mid k \in \mathbb{N}\}$ that only use finite conjunction, that is, we have $p\models \varphi \Leftrightarrow p \models \varphi^{k}$ $\forall k \in \mathbb{N}$.
\end{theorem}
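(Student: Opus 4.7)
The plan is to construct an increasing chain $\{\varphi^{k}\mid k\in\mathbb{N}\}$ of finite normal approximations of $\varphi$ by structural induction on the (finite) normal depth of $\varphi$, and then to establish the two directions of $p\models\varphi\Leftrightarrow p\models\varphi^{k}\ \forall k$ separately: the forward direction is immediate from the construction, while the backward direction relies on a K\"onig-style pigeonhole argument that uses image-finiteness of $p$ exactly at each prefixed subformula.

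First I would fix, for every complete normal subformula
\[
\psi\ =\ \Bigl(\bigwedge_{\sigma\in\Gamma_{1}}\sigma\ \wedge\ \bigwedge_{\sigma\in\Gamma_{2}}\neg\sigma\Bigr)\ \wedge\ \bigwedge_{i\in I}a_{i}\,\psi_{i}
\]
appearing in the recursion, monotone finite exhaustions $\Gamma_{1}=\bigcup_{k}\Gamma_{1}^{(k)}$, $\Gamma_{2}=\bigcup_{k}\Gamma_{2}^{(k)}$ and $I=\bigcup_{k}I^{(k)}$, with each piece finite. Then I would set $\psi^{0}=\top$ and
\[
\psi^{k+1}\ =\ \bigwedge_{\sigma\in\Gamma_{1}^{(k+1)}}\sigma\ \wedge\ \bigwedge_{\sigma\in\Gamma_{2}^{(k+1)}}\neg\sigma\ \wedge\ \bigwedge_{i\in I^{(k+1)}}a_{i}\,\psi_{i}^{k}.
\]
Each $\psi^{k}$ uses only finite outer conjunctions and, by construction, $\psi^{k+1}$ syntactically refines $\psi^{k}$, so $p\models\psi^{k+1}\Rightarrow p\models\psi^{k}$ for every process $p$. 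The forward implication $p\models\varphi\Rightarrow p\models\varphi^{k}$ then follows by a simple induction on $k$: the atomic conjuncts of $\varphi^{k}$ form a sub-collection of those of $\varphi$, and for every $i\in I^{(k)}$ any witness $p'$ with $p\stackrel{a_{i}}{\rightarrow}p'$ and $p'\models\varphi_{i}$ satisfies $p'\models\varphi_{i}^{k-1}$ by the inductive hypothesis on smaller normal depth.

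For the backward implication, assume $p$ is image-finite and $p\models\varphi^{k}$ for every $k$. Each atomic $\sigma\in\Gamma_{1}$ eventually enters some $\Gamma_{1}^{(k)}$, so $p\models\sigma$, and symmetrically $p\models\neg\sigma$ for each $\sigma\in\Gamma_{2}$; the completeness condition $\Gamma_{2}=\overline{\Gamma_{1}}$ makes this exhaustion cover every atom relevant to $\varphi$. For each prefixed conjunct $a_{i}\varphi_{i}$, choose $k_{0}$ with $i\in I^{(k)}$ for $k\ge k_{0}$; then for every such $k$ there exists $p_{k}$ with $p\stackrel{a_{i}}{\rightarrow}p_{k}$ and $p_{k}\models\varphi_{i}^{k-1}$. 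Since $\{q\mid p\stackrel{a_{i}}{\rightarrow}q\}$ is finite, pigeonhole yields a fixed successor $p'$ which equals $p_{k}$ for cofinally many $k$; by the monotonicity of the approximating sequence, this $p'$ satisfies $\varphi_{i}^{k}$ for every $k$, and the inductive hypothesis applied to $\varphi_{i}$ (strictly smaller normal depth) on the image-finite $p'$ gives $p'\models\varphi_{i}$, whence $p\models a_{i}\varphi_{i}$. Collecting all such conclusions yields $p\models\varphi$.

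The main obstacle is exactly this coordination step: the existential quantifier implicit in $a_{i}\varphi_{i}$ forces us to exhibit a single successor that simultaneously witnesses $\varphi_{i}^{k}$ for every $k$, whereas a priori the witnesses $p_{k}$ can vary with $k$. Image-finiteness of $p$ together with the monotonicity $\varphi_{i}^{k+1}\Rightarrow\varphi_{i}^{k}$ is precisely what collapses the pigeonhole to a single limit witness; without either ingredient the argument breaks down, mirroring the well-known failure of the Hennessy--Milner limit theorem for processes with infinite branching and thus explaining why the restriction to image-finite processes in the hypothesis cannot be dropped.
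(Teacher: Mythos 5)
Your argument is correct and is essentially the intended one: the paper omits this proof (deferring it to the extended version), but its remark that normal formulas have finite normal depth "and the set they form can be explored by structural induction" points exactly to your strategy of inducting on normal depth, exhausting the infinite conjunctions by monotone finite stages, and using image-finiteness plus the monotonicity $\varphi_i^{k+1}\Rightarrow\varphi_i^{k}$ to extract, by pigeonhole, a single $a_i$-successor witnessing all approximants at once. The only implicit assumption worth flagging is that the index sets $\Gamma_1,\Gamma_2,I$ admit countable finite exhaustions (i.e.\ $Act$ and the relevant $\mathcal{L}'_N$ are countable), but this is already presupposed by the statement's indexing of the approximants by $k\in\mathbb{N}$.
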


\begin{theorem}\label{isomorfia_observaciones}
We can define a natural correspondence between the set of complete normal formulas associated to a semantics $\mathcal{CN_{Y_N}(L_N^{\prime \prime})}$ and the corresponding domain of observations $BGO_N$ or $LGO_N$. That correspondence $\leftrightarrow$ satisfies that $\varphi \leftrightarrow \theta \Rightarrow (p \models \varphi \Leftrightarrow \theta \in XGO_N(p))$ with $X=B$ or $X=L\,$.
Moreover, this correspondence produces the following results for each of the semantics in the spectrum:
\begin{enumerate}
\item The set of complete normal formulas $\mathcal{CN_{NS}(L_N^{\prime \prime})}$ (resp. $\mathcal{CN_{D_N}(L_N^{\prime \prime})}$) and the domain of branching general observations $GBO_N$ (resp. $dBGO_N$) are isomorphic, that is, $\leftrightarrow$ is one to one.
\item The set of complete normal formulas $\mathcal{CN}_{\leq_N^{l}}\mathcal{(L_N^{\prime \prime})}$, $\mathcal{CN}_{\leq_N^{l\supseteq}}\mathcal{(L_N^{\prime \prime})}$ and the domain of linear general observations $LGO_N$ are isomorphic, that is, $\leftrightarrow$ is one to one.
\item The set of complete normal formulas $\mathcal{CN}_{\leq_N^{lf}}\mathcal{(L_N^{\prime \prime})}$ (resp. $\mathcal{CN}_{\leq_N^{lf\supseteq}}\mathcal{(L_N^{\prime \prime})}$) and the quotient domain $LGO_N/ _{\simeq_{N}^{lf}}$ (resp. $LGO_N/ _{\simeq_{N}^{lf\supseteq}}$) are isomorphic, that is, $\leftrightarrow^{-1}$ is injective and $\varphi \leftrightarrow \theta$ iff $\theta \simeq_N^{lf\supseteq} \theta_{\varphi}$, for some adequate $\theta_{\varphi}$.
\end{enumerate}
\end{theorem}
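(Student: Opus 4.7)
The plan is to first define $\leftrightarrow$ explicitly by induction on the normal depth of a complete normal formula, then verify the semantic correspondence, and finally tackle the three clauses by examining what the defining rules of each $\mathcal{L}^{\prime}_{\mathcal{Y_N}}$ force on the shape of its normal formulas.

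I would begin by observing that every $\varphi \in \mathcal{CN(L)}$ has the shape $\bigwedge_{\sigma \in \Gamma_1}\sigma \wedge \bigwedge_{\sigma \in \overline{\Gamma_1}}\neg \sigma \wedge \bigwedge_{i \in I} a_i \varphi_i$ with $\Gamma_1 \subseteq \mathcal{L}_N^{\prime \prime}$ and $\varphi_i \in \mathcal{CN(L_N)}$. Because $\Gamma_1$ together with its complement specifies, for every formula of $\mathcal{L}_N^{\prime \prime}$, whether it is satisfied or not, it picks out a unique local observation $l(\Gamma_1) \in L_N$. I would then set $\varphi \leftrightarrow \langle l(\Gamma_1), \{(a_i,\theta_i) \mid i \in I\}\rangle$, where each $\varphi_i \leftrightarrow \theta_i$ by the induction hypothesis. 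Before moving on, I would check, case by case, that the shape restrictions on $\Gamma_1$ and $I$ imposed by the defining rules of each $\mathcal{L}^{\prime}_{\mathcal{Y_N}}$ match the constraints that define the corresponding observation domain: arbitrary branching for $NS$, a single successor in $LGO_N$ for the linear logics, and pairwise distinct labels for $D_N$.

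Next, I would prove the key implication $\varphi \leftrightarrow \theta \Rightarrow (p \models \varphi \Leftrightarrow \theta \in XGO_N(p))$ by induction on normal depth. At each step, $p \models \varphi$ unfolds into (i) $L_N(p) = l(\Gamma_1)$, which holds exactly because completeness of the normal formula forces the local conjunction to pin down $L_N(p)$, and (ii) for every $i \in I$ there is some transition $p \stackrel{a_i}{\rightarrow} p_i$ with $p_i \models \varphi_i$. By the induction hypothesis, (ii) is equivalent to $\theta_i \in XGO_N(p_i)$, and the definition of $XGO_N(p)$ packages (i) and (ii) exactly into $\theta \in XGO_N(p)$. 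Theorem \ref{equivalencia_logica_normal} then transfers the satisfaction result from complete normal formulas to the full logic.

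For the three clauses, item (1) is essentially structural: the two-sided inverse that reconstructs a formula from an $XGO_N$ tree is well-defined because the unrestricted conjunction clause in $\mathcal{L}^{\prime}_{NS}$ admits arbitrary labelled branching, while the restricted clause in $\mathcal{L}^{\prime}_{D_N}$ corresponds to the determinism constraint of $dBGO_N$; hence $\leftrightarrow$ is a bijection. Item (2) is analogous, with the linear shape of the rules in Def. \ref{linear_logica} for $\leq_N^{l}$ and $\leq_N^{l\supseteq}$ giving linear trees, so that $\leftrightarrow$ is a bijection with $LGO_N$. The hard part, and the only place where real work beyond bookkeeping is required, is item (3): in $\leq_N^{lf}$ and $\leq_N^{lf\supseteq}$ the local part is only fully fixed at the leaf of the formula (through $\mathcal{L}_N^{\equiv}$ or $\mathcal{L}_N^{\neg}$ respectively), while along the computation the formula carries no information about intermediate $L_N$-values. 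Thus two distinct observations in $LGO_N$ sharing the same final local observation produce the same formula, which is precisely the identification under $\simeq_N^{lf}$ (resp. $\simeq_N^{lf\supseteq}$). The obstacle is to state and prove this equivalence cleanly: I would first characterise $\simeq_N^{lf}$ and $\simeq_N^{lf\supseteq}$ as ``agreement on sequences of labels together with equality (resp. inclusion) of the final local observation'', then show by induction that two complete normal formulas in $\mathcal{CN}_{\leq_N^{lf\supseteq}}\mathcal{(L_N^{\prime \prime})}$ mapping to $\simeq_N^{lf\supseteq}$-equivalent observations must coincide syntactically, which yields the injectivity of $\leftrightarrow^{-1}$ asserted by the theorem.
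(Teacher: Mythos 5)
The paper itself contains no proof of this theorem --- it is among the proofs deferred to the extended version --- so there is nothing to compare your argument against line by line; I can only assess it on its own terms. Your skeleton (define $\leftrightarrow$ by induction on normal depth, sending $(\bigwedge_{\sigma\in\Gamma_1}\sigma\wedge\bigwedge_{\sigma\in\overline{\Gamma_1}}\neg\sigma)\wedge\bigwedge_{i\in I}a_i\varphi_i$ to $\langle l(\Gamma_1),\{(a_i,\theta_i)\}\rangle$, prove the satisfaction correspondence by the same induction, then read off the three clauses from the shape restrictions of each logic) is exactly the construction the theorem statement is pointing at, and items (1) and (2) would go through essentially as you describe.

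There are, however, two genuine gaps. First, not every complete normal formula determines an element of $L_N$: the assertion that $\Gamma_1$ together with $\overline{\Gamma_1}$ ``picks out a unique local observation $l(\Gamma_1)\in L_N$'' fails for inconsistent choices of $\Gamma_1$ (for $N=I$, take $\Gamma_1=\{a\top\}$ with $\neg 0\notin\Gamma_1$: the formula asserts $a\top\wedge 0$ and is unsatisfiable, hence corresponds to no node label in $\mathcal{P}(Act)$). You must either exclude unsatisfiable complete normal formulas or identify them before any ``one to one'' claim can hold. Second, your key lemma is stated too strongly: clause (i), ``$p\models\varphi$ unfolds into $L_N(p)=l(\Gamma_1)$,'' is only correct for the logics that admit the symmetric closure at the relevant position. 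For $\leq_N^{lf\supseteq}$ the local conjunction is drawn from the negative closure $\mathcal{L}_N^{\neg}$ and only constrains $L_N(p)$ from one side, and at the intermediate nodes of $\leq_N^{lf}$ and $\leq_N^{lf\supseteq}$ the formula carries no local information at all; this is precisely why item (3) lands in a quotient $LGO_N/_{\simeq_N^{lf}}$ rather than in $LGO_N$. You do acknowledge this asymmetry when you discuss item (3), but it contradicts the uniform biconditional you prove earlier, so the satisfaction lemma needs to be restated per family of logics (exact determination of $L_N(p)$ where the symmetric closure is available, one-sided consistency otherwise, with $\theta$ then only determined up to $\simeq_N^{lf}$ or $\simeq_N^{lf\supseteq}$) before the induction is sound.
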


\begin{theorem}
The logical semantics   $\sqsubseteq^{\prime}_{\mathcal{Y_N}}$ induced by the logic $\mathcal{L}^{\prime}_{\mathcal{Y_N}}$, where $\mathcal{Y_N} \in \{NS, \leq_N^{l}, \leq_N^{l\supseteq}, \leq_N^{lf}, \leq_N^{lf\supseteq},$ $D_N\}$, is equivalent to the corresponding observational semantics, defined at Def. \ref{def:ramificadas} and Def. \ref{def:lineales}. 
\end{theorem}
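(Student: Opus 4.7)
The plan is to chain together the preceding results so that the logical preorder is pushed through a sequence of equivalent reformulations until it coincides with the observational preorder. First, by Theorem \ref{equivalencia_logica_normal}, for each $\mathcal{Y_N}$ in the stated list the preorder $\sqsubseteq'_{\mathcal{Y_N}}$ can be computed looking only at normal formulas $\mathcal{N_{Y_N}(L_N'')}$, since this set is equivalent to the full logic $\mathcal{L}'_{\mathcal{Y_N}}$. Second, I would reduce further to \emph{complete} normal formulas: an arbitrary normal formula fixes only a pair $(\Gamma_1,\Gamma_2)$ of positive and negated subformulas at each internal node, and any such node may be rewritten, by fattening $\Gamma_2$ to a complement of $\Gamma_1$ and then using Proposition \ref{disyuncion_teo} at the enclosing level, as a disjunction of complete normal formulas. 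Hence $p \sqsubseteq'_{\mathcal{Y_N}} q$ iff every complete normal formula satisfied by $p$ is also satisfied by $q$.

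Third, I invoke Theorem \ref{isomorfia_observaciones} to translate complete normal formulas into observations: the correspondence $\varphi \leftrightarrow \theta$ satisfies $p \models \varphi \Leftrightarrow \theta \in XGO_N(p)$ with $X = B$ or $X = L$ depending on $\mathcal{Y_N}$. For $\mathcal{Y_N} \in \{NS, D_N\}$ the correspondence is a bijection onto $BGO_N$ (resp.\ its deterministic subdomain $dBGO_N$), and the implication ``$p \models \varphi \Rightarrow q \models \varphi$ for all complete normal $\varphi$'' becomes literally $BGO_N(p) \subseteq BGO_N(q)$, which is $\leq_N^b$; combined with Theorem \ref{teorema_preliminares} this gives $\sqsubseteq'_{NS} \, = \, \sqsubseteq_{NS}$ and the analogous identity for $D_N$. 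For $\mathcal{Y_N} \in \{\leq_N^{l}, \leq_N^{l\supseteq}\}$ the same argument, now via a bijection onto $LGO_N$, gives inclusion of the linear observation sets up to the ordering prescribed by the closure used at the leaves (negative closure at $\leq_N^{l\supseteq}$ matches the pointwise $\supseteq$ comparison on offers, symmetric closure at $\leq_N^{l}$ matches equality), yielding exactly the orders of Definition \ref{ordenes}.

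For the quotient cases $\mathcal{Y_N} \in \{\leq_N^{lf}, \leq_N^{lf\supseteq}\}$, the correspondence of Theorem \ref{isomorfia_observaciones}.3 is no longer a bijection on $LGO_N$: a single complete normal formula corresponds only to an equivalence class under $\simeq_N^{lf}$ (resp.\ $\simeq_N^{lf\supseteq}$), reflecting that conjunction is not available inside prefixes in these linear logics, so only the final offer is pinned down. I would therefore argue that $p \models \varphi$ iff some representative $\theta \in LGO_N(p)$ lies in the class $[\theta_\varphi]$, and then show that ``every class of $LGO_N(p)$ meets $LGO_N(q)$'' is exactly $LGO_N(p) \leq_N^{lf} LGO_N(q)$, resp.\ $\leq_N^{lf\supseteq}$, by Definition \ref{ordenes}; combined with Theorem \ref{ordenes_semanticas} for the $I$-layer and with the analogous $N$-layer facts this closes the cases.

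The last step is to justify the restriction to image-finite processes: Theorem \ref{aproximacion} says that a complete normal formula and the family of its finite approximants $\{\varphi^k\}_{k\in\mathbb{N}}$ induce the same satisfaction on such processes, so preservation of satisfaction under $\sqsubseteq'_{\mathcal{Y_N}}$ at the finite level lifts to the full set of complete normal formulas. The main obstacle I anticipate is precisely the delicate case 3 above: one must be careful that the induced equivalence on observations coming from the logic is exactly $\simeq_N^{lf}$ (resp.\ $\simeq_N^{lf\supseteq}$), neither coarser (which would fail to capture the trace discipline) nor finer (which would require intermediate conjunctions not permitted by Def. \ref{linear_logica}). Verifying this requires inspecting the specific closures $\mathcal{L}_N^{\equiv}$ and $\mathcal{L}_N^{\neg}$ used in the definitions of $\mathcal{L}'_{\leq_N^{lf}}$ and $\mathcal{L}'_{\leq_N^{lf\supseteq}}$, and matching them against the final-state comparison $X_n = Y_n$, resp.\ $X_n \supseteq Y_n$, that appears in Def. \ref{ordenes}.
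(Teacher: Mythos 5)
Your proposal is correct and follows essentially the route the paper intends: reduce $\sqsubseteq'_{\mathcal{Y_N}}$ to (complete) normal formulas via Theorem \ref{equivalencia_logica_normal} and a disjunction argument in the style of Proposition \ref{disyuncion_teo}, transfer to observations via the correspondence of Theorem \ref{isomorfia_observaciones} (distinguishing the bijective cases from the quotient cases for $\leq_N^{lf}$ and $\leq_N^{lf\supseteq}$), and use Theorem \ref{aproximacion} together with Theorems \ref{teorema_preliminares} and \ref{ordenes_semanticas} to close the argument for image-finite processes. You also correctly identify the genuinely delicate point, namely matching the induced equivalence on observations with $\simeq_N^{lf}$ and $\simeq_N^{lf\supseteq}$ exactly.
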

\section{The real diamond structure} \label{cap:diamante}
Now we will explore in more detail the real structure of the extended spectrum, as it was already done at \cite{fgp09_equational}. One could think that each diamond in that spectrum corresponds to a lattice structure.  However, this is not the case: there is another semantics coarser than both \textit{N}-readiness and \textit{N}-failure traces and finer than \textit{N}-failures, and another finer than those two semantics and coarser than \textit{N}-ready traces.  

\begin{figure}[ht]
\begin{center}
\scalebox{0.9}{\scriptsize
\begin{picture}(147,43)
\put(0,19){RS} \put(30,19){PW} \put(60,19){RT} \put(106,0){R} \put(104,39){FT} \put(82,20){$R\wedge FT$} \put(112,19){$R \vee FT$} \put(147,19){F}
\put(13,22){\vector(1,0){15}} \put(43,22){\vector(1,0){15}} \put(72,25){\vector(2,1){31}} \put(72,19){\vector(2,-1){31}}\put(72,22){\vector(1,0){9}} \put(135,22){\vector(1,0){11}} \put(115,41){\vector(2,-1){31}}\put(114,3){\vector(2,1){32}}
\put(94,25){\vector(1,1){12}} \put(94,18){\vector(1,-1){11}}  \put(109,7){\vector(1,1){11}} \put(110,37){\vector(1,-1){12}}
\end{picture}}
\caption{The diamond below ready simulation}\label{diamante}
\vspace{-0.2cm}
\end{center}
\end{figure}
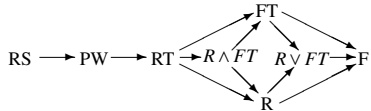

Focusing on the case $N=I$ the obtained complete structure is that shown in Figure \ref{diamante}, in which we include the new join semantics $R \wedge FT$ and the meet one $R \vee FT$.  As proved in \cite{fgp09_observational}, the meet semantics $R \vee FT$ was already studied by Roscoe under the name of revivals semantics in \cite{Ros09}. 

Since Readiness semantics observes the ready set at the end of the trace, while Failure Traces observes failures during the computation, it is natural to expect that the join semantics $R \wedge FT$  will observe both failures during the computation and ready sets at the end. This is indeed the case. The corresponding observational characterization in the general case is obtained by means of a new order $\leq_N^{l\supseteq \wedge f}$ on $LGO_N$. 

\begin{definition}\label{orden_wedge}
Let $\zeta,\zeta^\prime\subseteq LGO_N$, we define
\begin{center}
$\zeta\leq_N^{l\supseteq \wedge f}\zeta^\prime$ $\Leftrightarrow$ $\forall$ $X_0a_1X_1\ldots X_n \in \zeta$ $\exists$ $Y_0a_1Y_1\ldots Y_n \in \zeta^\prime$ $(\forall i \in 0..n-1$ $X_i\supseteq Y_i)$ $\wedge$ $X_n=Y_n\ .$
\end{center}
\end{definition}

It is easy to see that $\leq_N^{l\supseteq \wedge f}$ is indeed the conjunction of $\leq_N^{l\supseteq}$ and $\leq_N^{lf}$, that is, $\zeta\leq_N^{l\supseteq \wedge f}\zeta^\prime$ $\Leftrightarrow$ $\zeta\leq_N^{l\supseteq}\zeta^\prime \wedge$ $\zeta\leq_N^{lf}\zeta^\prime$.
The observational characterization of the meet semantics $R \vee FT$ is a bit more complicated.

\begin{definition}
Let $\zeta,\zeta^\prime\subseteq LGO_N$, we define
\begin{center}
$\zeta\leq_N^{l\supseteq\vee f}\zeta^\prime$ $\Leftrightarrow$ $\forall$ $X_0a_1X_1\ldots X_n \in \zeta$ $\exists$ $\{Y_0a_1Y_1\ldots Y_n^{j} |j \in J\} \subseteq \zeta^\prime$ such that $X_n=\bigcup_{j \in J}Y_n^{j}\ .$
\end{center}
\end{definition}

By means of some simple algebraic manipulations we can get the following equivalent expression:
\begin{center}
$\zeta\leq_N^{l\supseteq\vee f}\zeta^\prime$ $\Leftrightarrow$ $\forall$ $X_0a_1X_1\ldots X_n \in \zeta$ $\forall a \in X_n$ $\exists$ $Y_0a_1Y_1\ldots Y_n \in \zeta^\prime$ such that $(a \in Y_n \wedge Y_n \subseteq X_n)\ .$
\end{center}

Next we present the logical characterizations of these new semantics. Obviously, they are in the linear side of the spectrum and therefore they will have a similar structure to those for the linear semantics studied before. Once again, we start with the particular case $N=I$. $R \wedge FT$ is finer than both R and FT, and the logic characterizing it will be just the union of those characterizing $R$ and $FT$. In the case of $R \vee FT$ we need to connect the clauses that define those two logics in an adequate way. 

\begin{definition}\label{wedge_vee_logica}
\begin{enumerate}
\item We define the set of formulas $\mathcal{L}^{\prime}_{\leq_I^{l\supseteq\wedge f}}$, as that generated by the clauses: \lin{\conjyformneg{\leq_I^{l\supseteq\wedge f}}{I} ; \conj{\leq_I^{l\supseteq\wedge f}}{I}}{\leq_I^{l\supseteq\wedge f}}
\item We define the set of formulas $\mathcal{L}^{\prime}_{\leq_I^{l\supseteq\vee f}}$ as that generated by the clauses: \lin{\conjU{\leq_I^{l\supseteq\vee f}}{I}}{\leq_I^{l\supseteq\vee f}}
\end{enumerate}
\end{definition}

\begin{example}
$P_2$ and $P_3$ in Figure \ref{example1} satisfy $P_2 \sim_{F} P_3$, but $P_2 \nleq_{R \vee F} P_3 \ $. Taking p= abc+a(bd+c) and q= p+ a(bc+c) we have $p\sim_{R\wedge FT}q$ but $p\nsim_{RT}q\ $.
\end{example}

\begin{theorem}
The logical semantics  $\sqsubseteq^{\prime}_{\leq_I^{l\supseteq\wedge f}}$ (resp. $\sqsubseteq^{\prime}_{\leq_I^{l\supseteq\vee f}}$) induced by the logic $\mathcal{L}^{\prime}_{\leq_I^{l\supseteq\wedge f}}$ (resp. $\mathcal{L}^{\prime}_{\leq_I^{l\supseteq\vee f}}$) is equivalent to the observational semantics defined by $LGO_I$, with the order ${\leq_I^{l\supseteq\wedge f}}$ (resp. ${\leq_I^{l\supseteq\vee f}}$.)
\end{theorem}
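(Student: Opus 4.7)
The plan is to follow the observational/logical correspondence already established by Theorem \ref{teoequiv} and (the general) Theorem \ref{isomorfia_observaciones}, adapting it to these two newer semantics, for which Proposition \ref{equiv_entre_logicas} does not apply since neither $R\wedge FT$ nor the revivals semantics $R\vee FT$ appear in van Glabbeek's original table. In both cases I would identify the class of normal formulas inside the logic that is sufficient up to semantic equivalence, and exhibit a mapping from these formulas to observations in $LGO_I$ such that $p\models\varphi$ iff the observation $\theta_\varphi$ (or an equivalent one, in the ``quotient'' cases) belongs to $LGO_I(p)$. The theorem then follows by showing that this mapping intertwines the logical preorder with the observational order.

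For the $\wedge$ case, the first step is to note that $\mathcal{L}^{\prime}_{\leq_I^{l\supseteq\wedge f}}$ extends both $\mathcal{L}^{\prime}_{FT}$ (specialise by taking the terminating $\sigma$ to be $\top$) and $\mathcal{L}^{\prime}_{R}$ (take all intermediate refusals $F_i=\top$); together with Theorem \ref{teoequiv} this already yields $\sqsubseteq^{\prime}_{\leq_I^{l\supseteq\wedge f}}\,\subseteq\,\leq_I^{l\supseteq}\cap\leq_I^{lf}\,=\,\leq_I^{l\supseteq\wedge f}$, using the identity noted immediately after Definition \ref{orden_wedge}. For the converse I would analyse a generic normal formula $\varphi = F_0 \wedge a_1(F_1 \wedge a_2(\ldots a_n(F_n\wedge\sigma)\ldots))$, with $F_i\in\mathcal{L}^{\neg}_I$ and $\sigma\in\mathcal{L}^{\equiv}_I$, and verify by structural induction that $p\models\varphi$ iff $LGO_I(p)$ contains some lgo $X_0 a_1 X_1 \ldots X_n$ whose intermediate offers $X_i$ avoid the actions refused by $F_i$ and whose final offer $X_n$ exactly realises the positive/negative partition encoded by $\sigma$. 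Assuming $p \leq_I^{l\supseteq\wedge f} q$, the existence of a single matching lgo $Y_0 a_1 Y_1 \ldots Y_n$ in $LGO_I(q)$ with $Y_i\subseteq X_i$ for $i<n$ and $Y_n = X_n$ yields $q\models\varphi$.

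For the $\vee$ case, the new clause $\sigma\wedge\bigwedge_{j\in J}\neg\sigma_j$ with $\sigma,\sigma_j\in\mathcal{L}^{\prime}_I$ captures precisely a \emph{revival}: when $\sigma=a\top$ and $\sigma_j=b_j\top$ the formula asserts that action $a$ is enabled while each $b_j$ is refused. A normal formula therefore has the form $a_1(a_2(\ldots a_n(\sigma\wedge\bigwedge_{j}\neg\sigma_j)\ldots))$ (with mild adjustments for the degenerate cases $\sigma=\top$ or $\sigma=\neg 0$), and satisfaction reduces to the existence of an lgo $X_0 a_1 X_1\ldots X_n\in LGO_I(p)$ with $a\in X_n$ and $\{b_j\mid j\in J\}\cap X_n = \emptyset$. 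This is exactly the information controlled by the second equivalent expression of $\leq_I^{l\supseteq\vee f}$ displayed just before Definition \ref{wedge_vee_logica}, from which both directions of the equivalence follow by structural induction, restricting as usual to image-finite processes so that Theorem \ref{aproximacion} applies.

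The main obstacle lies in the $\wedge$ case: the extra expressive power of $\mathcal{L}^{\prime}_{\leq_I^{l\supseteq\wedge f}}$ compared to $\mathcal{L}^{\prime}_{FT}\cup\mathcal{L}^{\prime}_{R}$ comes from the ability to assert intermediate refusals \emph{and} a final exact offer \emph{along the same path}, so the reduction to $\sqsubseteq^{\prime}_{FT}\cap\sqsubseteq^{\prime}_{R}$ cannot be performed formula by formula. The structural-induction argument requires single-path witnesses in $q$, and these are available precisely because of the identity $\leq_I^{l\supseteq\wedge f} = \leq_I^{l\supseteq}\cap\leq_I^{lf}$ invoked above; losing that identity would invalidate the whole reduction.
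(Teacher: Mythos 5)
Your proposal is correct and follows the same basic strategy as the paper's proof, which is extremely terse: it asserts that $\mathcal{L}^{\prime}_{\leq_I^{l\supseteq\wedge f}}$ equals $\mathcal{L}^{\prime}_{\leq_I^{l\supseteq}}\cup\mathcal{L}^{\prime}_{\leq_I^{lf}}$ and that $\mathcal{L}^{\prime}_{\leq_I^{l\supseteq\vee f}}$ equals $\mathcal{L}^{\prime}_{\leq_I^{l\supseteq}}\cap\mathcal{L}^{\prime}_{\leq_I^{lf}}$, deduces the join case from the identity $\leq_I^{l\supseteq\wedge f}=\leq_I^{l\supseteq}\cap\leq_I^{lf}$, and settles the meet case by the completeness check $p\nleq_I^{l\supseteq\vee f}q\Rightarrow\exists\varphi$ distinguishing them. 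Where you genuinely add something is in the $\wedge$ case: read as a literal set identity the paper's equality fails, because the clauses of Def.~\ref{wedge_vee_logica} generate mixed formulas such as $\neg b\top\wedge a_1(\dots a_n(X)\dots)$ that assert intermediate refusals \emph{and} an exact final offer along the same path, and these lie in neither $\mathcal{L}^{\prime}_{FT}$ nor $\mathcal{L}^{\prime}_{R}$; the equality can only hold up to $\sim$, and establishing the preservation of those extra formulas is exactly your structural induction using a single lgo witness $Y_0a_1\ldots Y_n$ with $Y_i\subseteq X_i$ for $i<n$ and $Y_n=X_n$. Your diagnosis that the reduction to $\sqsubseteq^{\prime}_{FT}\cap\sqsubseteq^{\prime}_{R}$ cannot be performed formula by formula is precisely the point the paper leaves implicit, so your version is the more honest one. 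Two minor remarks: for the $\vee$ case your direct normal-form-to-lgo analysis is a sound substitute for the paper's intersection identity plus completeness check, and the appeal to Theorem~\ref{aproximacion} at the end is not really needed for that step --- the image-finiteness restriction matters for the observational correspondence globally, not for handling the revival formulas themselves.
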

\begin{proof}
In the case of $R\wedge FT$ we just need to check that $\mathcal{L}^{\prime}_{\leq_I^{l\supseteq\wedge f}}$ = $\mathcal{L}^{\prime}_{\leq^{l\supseteq}_I} \cup \mathcal{L}^{\prime}_{\leq^{lf}_{I}}$. The meet of two semantics is not always defined by the intersection of the corresponding logics. However, in this case we have that $\mathcal{L}^{\prime}_{\leq_I^{l\supseteq\vee f}}$ = $\mathcal{L}^{\prime}_{\leq^{l\supseteq}_I} \cap \mathcal{L}^{\prime}_{\leq^{lf}_{I}}$, and then to check that it defines $R \vee FT$ it is enough to see that $p \nsqsubseteq^{l\supseteq \vee f}_{I} q \ \Rightarrow \ (\exists \varphi \in \mathcal{L}^{\prime}_{\leq^{l\supseteq \vee f}_{I}} \ p\models \varphi \wedge q \nvDash \varphi)$, which is nearly immediate.
\end{proof}

By replacing \textit{I} above by the generic \textit{N} we get the definitions and results for the general case.
\section{Conclusions and future work}\label{conclusiones}
We have concluded in this paper the work on unification of  all the strong process semantics by considering here the logic approach, while \cite{fgp09_equational,fgp09_observational} considered the observational and the equational approaches. As in the previous cases, our main goal was to clarify the relationships between all the process semantics, that were classified in a slightly messy way in \cite{Gla01}. Our starting point has been the Hennessy-Milner Logic \cite{hm85}: we have looked for sublogics with a simple structure, that characterize each of the semantics in the \emph{enlarged spectrum}. The difference between branching-time semantics and linear-time semantics is the key point to isolate the ingredients that, combined in different ways, produce the different semantics. 

It is interesting to comment on the difference between the observational and the logical characterizations. Note that in the observational framework the observations had a complex structure, where local observations informed us about the (static) properties of the states of a process, while the arcs gave us the dynamic information. Instead, the formulas of the logic \emph{HML} do not possess of such structure, having only a \textit{low level} structure induced by the combination of  prefix and conjunction. This is why we needed to introduce normal forms in order to build the \textit{high level} structure of observations at the formulas.

Came as a surprise to us the discovery of two more linear semantics at each layer of the spectrum. Moreover, we found out that the classic logical characterization of Possible Worlds (PW) was wrong. A too ad-hoc selection of the rules defining each logic was probably the cause, that we discovered when trying to unfold the original characterization to look for the equivalent presentation inside our model.

Now that we have available all the unified characterizations of the semantics we have a much clearer picture of the spectrum, and we can use the parameterized definitions to prove generic properties of all or a part of the semantics in a generic way, without having to repeat similar proofs for each of them.

There are several directions in which we plan to extend our work. Weak semantics are an obvious target: if there are indeed many strong process semantics, once we introduce internal actions a terrible explosion occurs \cite{Gla93}, and the unification work is even more necessary  in order to clarify which are the most interesting semantics and what the differences between them are. Another interesting direction comes from the combinations of logic and algebra, as done by Luttgen and Vogler \cite{lv10,lv09}. Again, we are interested in studying whether their proposal is canonical or can be parameterized in some way in order to obtain other interesting combinations. Finally, a couple of papers \cite{bc10,Gut09} have appeared recently, where the logical characterizations of the non-interleaving semantics are developed.

\bibliographystyle{eptcs}
\bibliography{bibliografia}
\end{document}